\definecolor{codeBackground}{rgb}{0.9, 0.9, 0.8}
\newcounter{greybox}[section]
\renewcommand{\thegreybox}{\arabic{section}.\arabic{greybox}}
\newenvironment{myalg}[1][]
{\refstepcounter{greybox}
 \begin{mdframed}[ 
    backgroundcolor=codeBackground, 
    linecolor=black, 
    linewidth=0.5pt, 
    roundcorner=5pt, 
    skipabove=10pt, 
    skipbelow=10pt, 
    innerleftmargin=10pt, 
    innerrightmargin=10pt, 
    innertopmargin=10pt, 
    innerbottommargin=10pt 
  ]
  \ifx\relax#1\relax
  \else{\centering\textbf{#1 }\par\medskip}\fi}
{\centering{\textbf{Algorithm \thegreybox}}\end{mdframed}
}
\newtheorem{theorem}{Theorem}[section]
\newtheorem{lemma}[theorem]{Lemma}
\newtheorem{fact}[theorem]{Fact}
\newtheorem{definition}[theorem]{Definition}
\newtheorem{assumption}[theorem]{Assumption}
\newtheorem{claim}[theorem]{Claim}
\newtheorem{conjecture}[theorem]{Conjecture}
\newtheorem{problem}[theorem]{Problem}
\newtheorem{remark}[theorem]{Remark}
\numberwithin{equation}{section}
\newcommand{\R}{\mathbb{R}}
\newcommand{\cupow}{\texttt{cuPOW}}
 \newcommand{\cP}{\mathcal{P}}
\newcommand{\pow}{$\mathsf{PoW}$\xspace}
\newcommand{\pos}{$\mathsf{PoS}$\xspace}
\newcommand{\upow}{$\mathsf{PoUW}$\xspace}
\newcommand{\omri}[1]{{\color{purple}#1}}
\title{
Proofs of Useful Work from \\Arbitrary Matrix Multiplication}
\author{Ilan Komargodski  \and Omri Weinstein}
\date{}
\begin{document}

\maketitle

\begin{abstract}
We revisit the longstanding open problem of implementing Nakamoto's proof-of-work (PoW) consensus based on a \emph{real-world} computational task $T(x)$ (as opposed to artificial random hashing), in a truly  permissionless setting where the \emph{miner itself}  chooses the input $x$. The challenge in designing such a Proof-of-Useful-Work (PoUW) protocol, is using the \emph{native} computation of $T(x)$ to produce a PoW certificate with prescribed hardness and with negligible computational overhead over the worst-case complexity of $T(\cdot)$ -- This  ensures malicious miners cannot ``game the system" by fooling the verifier to accept with higher probability compared to honest miners (while using similar computational resources). Indeed, obtaining a PoUW with $O(1)$-factor overhead is trivial for \emph{any} task $T$, but also useless. 

Our main result is a PoUW for the task of \emph{Matrix Multiplication} $\mathsf{MatMul}(A,B)$ of \emph{arbitrary} matrices with  $1+o(1)$ multiplicative overhead compared to na\"ive $\mathsf{MatMul}$. We conjecture that our protocol has optimal security in the sense that a malicious prover cannot obtain any significant advantage over an honest prover. This conjecture is based on reducing hardness of our protocol to the task of solving a batch of low-rank random linear equations which is of independent interest.

Since $\mathsf{MatMul}$s are the bottleneck of AI compute as well as countless industry-scale applications, this primitive suggests a concrete design of a new L1 base-layer protocol, which nearly eliminates the energy-waste of Bitcoin mining -- allowing GPU consumers to \emph{reduce their AI training and inference costs}  by ``re-using" it for blockchain consensus, in exchange for block rewards (2-for-1). This blockchain is currently under construction. 
\end{abstract}

\newpage


\newcommand{\tsk}{\mathsf{x}}
\newcommand{\sol}{\mathsf{y}}
\newcommand{\difficulty}{\kappa}
\newcommand{\GPUCryptoHash}{{H_1}}
\newcommand{\GPUSimpleHash}{H_2}
\newcommand{\rk}{\mathsf{rk}}
\newcommand{\CPUCryptoHash}{\mathsf{CPUCryptoHash}}
\newcommand{\seed}{\sigma}
\newcommand{\bcstate}{\seed}
\newcommand{\localstate}{\mathsf{nonce}} 
\newcommand{\iv}{\mathsf{IV}} 
\newcommand{\com}{\mathsf{com}} 
\newcommand{\diag}{\mathsf{diag}} 
\newcommand{\GenNoiseMat}{\mathsf{GenNoiseMat}}
\newcommand{\tileSize}{\mathsf{tile}}
\newcommand{\cd}{n}
\newcommand{\MiningMatMult}{\mathsf{FunctionalMatMult}}
\newcommand{\Tr}{\mathsf{Tr}}
\newcommand{\Noise}{\textsc{Encode}}
\newcommand{\DeNoise}{\textsc{Decode}}
\newcommand{\MatMul}{\textsc{MatMul}\xspace}
\newcommand{\E}{\mathop{\mathbb{E}}}
\newcommand{\Work}{\mathsf{Work}}
\

\section{Introduction}
The concept of proofs of work (PoW), i.e., easy-to-check proofs of computational effort, was proposed in 1992 in a seminal work of Dwork and Naor~\cite{DBLP:conf/crypto/DworkN92}. Their original motivation was to discourage spam email messages by requiring the sender to compute a moderately hard, but not intractable, function in order for the email to go through.  This idea is far more general than combating spam emails; indeed, it can be used as a method for limiting access to any resource. 

A concretely efficient and very simple instantiation of a PoW (in the random oracle model) was proposed by Back in 1997 under the name Hashcash and formalized in a paper five years later~\cite{back2002hashcash}. Since then, the concept has been used as a denial-of-service counter measure technique in a number of systems. One way to describe the Hashcash algorithm's idea is via a protocol between a prover and a verifier, wherein the verifier gets eventually convinced that the prover did a moderate amount of work. The protocol works with a cryptographic hash function $H$, such as SHA256 or SHA3. At a very high level, the verifier sends a random string $\seed$ as a challenge and the goal of the prover is to find a string $x$  such that $H(\seed\| x)$ starts with~$t$ leading 0s.\footnote{The symbol $\| $ represents string concatenation.} Verification is very easy --- just count leading 0s --- and can even be done by a human eye. The soundness of the protocol relies on the fact that it is impossible to find an $x$ as above faster than doing a brute-force search, assuming $H$ is a random oracle.

Perhaps the most well-known application of Back's PoW idea is in the Bitcoin cryptocurrency network~\cite{nakamoto2008bitcoin,GarayKL24}. Instead of using PoW to deter malicious email senders, Bitcoin's PoW is used to enable competitive mining and thereby to secure the network. Specifically, a Bitcoin miner runs a program that collects unconfirmed transactions from  the network to form a block. A block is only accepted by the network if its hash meets a certain difficulty target, (essentially) described by the number of trailing 0s needed in the hash output.

\paragraph{PoW Criticism.} Undoubtedly, the realization of the Bitcoin network and its underlying technology is a revolutionary  achievement in distributed computing, with far-reaching applications in both theory and practice. Notwithstanding, a significant setback and common point of criticism on the technology is its energy-intensive nature and exorbitant environmental costs (consuming over 2\% annual  electricity in the US alone).\footnote{\url{https://ccaf.io/cbnsi/cbeci/ghg/comparisons} (accessed 02.04.2025).} 
To make matters worse, the need for continual acquisition of  specialized hardware (ASICs) to retain  competitive mining, proliferates electronic waste generation, as these devices quickly become obsolete. 

Thus, an outstanding question in this context is whether it is necessary for the mining process to be so wasteful and have no other use beyond securing and driving the blockchain. A similar question was raised already in the 1992 work of Dwork and Naor, in the context of using PoW as a spam filter~\cite{DBLP:conf/crypto/DworkN92}: 
\begin{center}
{\fontfamily{qcr}\selectfont
``Finally, the evaluation of the pricing function serves no useful purpose, except
serving as a deterrent. It would be exciting to come up with a scheme in which
evaluating the pricing function serves some additional purpose.''
}
\end{center}

\paragraph{Proofs of Useful Work.} The above challenge is nowadays commonly referred to as a \emph{proof of useful work} (PoUW). Intuitively, a proof of work is \emph{useful} if it simultaneously satisfies the following properties:
\begin{itemize}
    \item \underline{Economic value}: 
The results of computations performed
by miners should be of interest, and moreover, someone must be
willing to pay for the result of the computation (formally, a task is $\kappa$-useful, if someone is willing to pay $\kappa\$$ for its completion).
\item \underline{Efficiency}: The algorithm used to solve the
problems is essentially optimal. This property is necessary to make sure that there is no ``waste'' in the mining process as
compared to an external setting.
\item \underline{Security}: It should be infeasible to fool the verifier into believing a problem was solved 
using bounded resources while it was not.  
\end{itemize}  

The first formal treatment of PoUW we are aware of is due to Ball et al.~\cite{BallRSV17a} wherein the authors suggest a proof of work mechanism (in the plain model) based on the conjectured hardness of various fine-grained algorithmic problems. They termed the proof of work useful because the miner could choose the instance of the task it wishes to solve by itself. Unfortunately, the overhead of the honest prover was significant, requiring the prover to solve poly-log many different instances of the task just to convince the verifier that a single instance was solved. In other words, their scheme did not satisfy the efficiency property from above and in a follow-up version of their manuscript~\cite{DBLP:conf/crypto/BallRSV18} they retracted all claims about ``usefulness.'' 

The absence of a PoUW (prior to this work)  is not due to lack of trying. 
Not only was this problem mentioned by Dwork and Naor~\cite{DBLP:conf/crypto/DworkN92}, PoUW has been the holy-grail of distributed consensus since the early days of Bitcoin. Indeed, it avoids the \emph{tradeoff} between resource-efficiency and security, present in proof-of-stake (PoS) or other energy-efficient alternatives. 
As Vitalik Buterin, co-founder of Ethereum, articulates in his 2019 blog post:\footnote{\url{https://vitalik.eth.limo/general/2019/11/22/progress.html}, Bullet 7.}  

\begin{center}
{\fontfamily{qcr}\selectfont
\begin{tabularx}{1.0\textwidth}{X}
``If either an efficiently verifiable proof-of-computation for Folding@home can be produced, or if we can find some other useful computation which is easy to verify, then cryptocurrency mining could actually become a huge boon to society, not only removing the objection that Bitcoin wastes "energy", but even being socially beneficial by providing a public good."
\end{tabularx}
}
\end{center}

In the same post, Buterin adds that the challenge of obtaining a proof of useful work is ``{\fontfamily{qcr}\selectfont probably not feasible}''. The transition of Ethereum from PoW to PoS is, to a large extent, attributed to this impossibility conjecture.  

Even a theoretical construction of PoUW was elusive. Specifically, we are not aware of any suggestion for a truly useful proof of work mechanism where the miner (and not the network) gets to choose their own instance. We refer to a manuscript of Dotan and Tochner~\cite{DBLP:journals/corr/DotanTochner} surveying prior approaches,  identifying for each one which properties of a ``non-wasteful'' and ``truly useful'' proof of work they do not satisfy.

\subsection{Our Contributions}

Our main contribution is the first construction of a \emph{truly useful} PoW protocol, for the omnipresent 
task of \emph{matrix multiplication} (MatMul). At a high level, our (GPU-compatible) protocol can re-use native $\MatMul(A,B)$ computations of \emph{arbitrary} matrices (submitted by the miner at her own discretion/application), to produce a verifiable PoW certificate obeying Nakamoto's Poisson distribution, \emph{with $1+o(1)$ multiplicative overhead} over the 
worst-case time for computing $A\cdot B$
(e.g., $(1+ o(1))n^3$ in the case of square-matrices, assuming na\"ive MatMul is the baseline).

Before we describe the protocol (called $\cupow$; see Section~\ref{sec:overview}), 
let us first review the axioms of PoUW, and argue why the $\MatMul$ primitive satisfies them. 
\begin{itemize}
    \item Economic value: Matrix multiplication is the backbone of many industry-scale applications. In particular, they are (by far) the bottleneck of AI training and inference, where giant MatMuls 
    (as large as 40K x 40K in LLMs) are required for both forward and backward propagation. 
    In short, MatMuls are the primary reason for the   exorbitant electricity costs of AI and for its adoption barrier by millions of businesses 
    (c.f.\ GPT-4 training estimated at \$300 million and the next model of OpenAI is projected to cost 100x more).  See Section~\ref{sec:matmul_in_ai} for an overview. Beyond AI, matrix multiplication is a central operation in many other applications, including vector databases, quantum simulations, graphics rendering and statistical physics to mention a few. 
    \item Efficiency: Our method for matrix multiply has minimal overhead over the direct matrix multiplication algorithm. Specifically, the running time of our algorithm is $N\cdot (1+o(1))$ where $N$ is a well-known baseline implementation of matrix multiplication (in practice, $N = O(n^3)$). We do not introduce any hidden large constant or significant terms. Thus, our protocol is not only theoretically interesting, but is also very promising from a practical perspective.

    \item Security: 
    Our protocol's soundness relies on a hardness assumption of a certain \emph{direct-product} problem of \emph{correlated} instances (essentially random low-rank linear equations).     
    Under this direct product conjecture, no adversary with comparable runtime can obtain non-trivial advantage which causes the verifier to accept with higher probability. In particular, multiplying the \bf all-zeroes \rm matrices $A=B= {\bf 0 \rm}^{n\times n}$ using our protocol is \emph{as hard as any other product}! 
\end{itemize}

\paragraph{MatMul algorithms.} 
The discovery of Fast Matrix Multiplication (FMM) algorithms by Strassen \cite{strassen1969gaussian} and subsequent improvements~(starting with~\cite{coppersmith1987matrix} and most recently~\cite{DBLP:conf/soda/WilliamsXXZ24}), allowing to multiply $n\times n$ matrices in sub-cubic time $n^\omega \ll n^3$, had a profound theoretical impact in computer science and algorithm design. Our PoW protocol can be instantiated and applied with any of the above FMMs as the baseline algorithm.
Unfortunately, FMM algorithms are unlikely to be practical on any imaginable hardware, as they are are highly asymptotic (earning them the infamous name ``galactic algorithms"), or require significant memory.  

Indeed, since our work is meant to serve as a practical alternative to Bitcoin, our analysis assumes 
\emph{na\"ive MatMul} as the most practically-efficient baseline algorithm for $n\times k$ and $k\times m$, which runs in $\Theta(nkm)$ time. 

We stress that our protocol's security analysis holds \emph{even in the presence of FMM algorithms}, assuming that they are used to attack the protocol itself. Developing an analogues PoUW scheme to \cupow\;  for Strassen-like FMM algorithms is an intriguing question from a theoretical standpoint, but this is not the focus of this paper. 


\paragraph{A formal definition and a framework.}
As an independent contribution, to the best of our knowledge, we formulate the first non-trivial definition for proofs of work associated with computational tasks. This definition allows reasoning about different schemes and compare them one to another. 

\subsection{Related Work}
There have been various proposals to completely get rid of the wastefulness of PoW mining by requiring the ``waste'' of a different resource. Most well known is the notion of ``proof of stake'' where participants in the network need to lock up a sum of money and thereby receive voting rights proportion to their quantity of holdings (e.g.,~\cite{DBLP:journals/sigmetrics/BentovLMR14,DBLP:journals/iacr/GiladHMVZ17,DBLP:conf/sp/KerberKKZ19} to mentioned just a few). Other resources, such as space~\cite{DBLP:conf/crypto/DziembowskiFKP15,DBLP:conf/fc/ParkKFGAP18,cohen2019chia}, have been studied and deployed in systems. While these avoid energy waste, they incur waste in other domains, and they all introduce new economic and cryptographic challenges and tradeoffs due to the nature of the other resource being utilized to secure the network. 

\paragraph{Trapdoor matrices and algorithmic speedup.} In two recent beautiful works by Vaikuntanathan and Zamir~\cite{VaikuntanathanZamir25} and Braverman and Newman~\cite{BravermanNewman25}, the authors discovered a technique for sampling pseudorandom matrices such that whomever has a secret key can multiply them in near-linear time. The suggested constructions rely on standard cryptographic assumptions, i.e., learning parity with noise (LPN). The authors of~\cite{VaikuntanathanZamir25} suggest to use those so-called ``trapdoor matrices'' in various algorithmic applications that rely on products of random matrices in order to obtain a significant speedup; among a series of applications, they emphasize a faster-than-known classification inference algorithm. 
The authors of~\cite{BravermanNewman25} suggest applications of a similar technique in a context of secure delegation of linear algebra, wherein a (weak) client wishes to compute some public function of its private data. They achieve such protocols for linear public functions (i.e., matrix-matrix multipllication or matrix-vector multiplication) by ``masking'' the client's private input with a ``trapdoor matrix''. 

The above goal is incomparable and orthogonal to proofs of work, and the techniques seem to be unrelated. First, the goal in proofs of work is somewhat reverse: we want hardness to be uniform for everyone, even for those who choose the inputs. Second, in proofs of work there is no possibility for hidden trapdoors as all of the computation is done locally by the party doing the computation. Whether ideas from the above works have any indirect bearing to our protocol is an interesting question. 
 
\if0
Nakamoto's Bitcoin protocol is a decentralized payment system that is based on maintaining
a public transaction ledger in a distributed manner. The ledger is maintained by players called \emph{miners}, executing a protocol that maintains and extends a distributed data structure called a \emph{blockchain}.
Players need not have any knowledge about one another or even awareness of the number of players involved in the protocol at any given instance. This significantly diverges from conventional models and outcomes in building ledgers (also known as consensus or Byzantine agreement), as well as other fundamental tasks in distributed computing, where the presumption often lies in players possessing authenticated communication channels among pairs or being initialized with the public keys of all involved parties. 

Nakamoto's blockchain protocol relies on the cryptographic concept of \emph{proofs of work} (\pow) to regulate message transmission and probabilistically create opportunities for reconciling the potentially disparate perspectives of the participants, even in the presence of a subset acting adversarially. Bitcoin's \pow essentially
amounts to brute-forcing a hash-output inequality based on SHA-256. A miner that solves the \pow publishes the next block and is rewarded by receiving bitcoins. Blocks that comprise the blockchain contain sets of transactions that are generated and broadcasted at will by owners of bitcoins. 

In more detail, each  participant of the blockchain maintains its own view of the ``chain of blocks of records.'' Each block in a chain consists of a triple $(h_{-1}, \eta, m)$, where $h_{-1}$  is a pointer to the previous block in the chain, $m$ is the record component of the block, and
$\eta$ is a \pow, which should be thought as some certificate of authenticity of the whole blockchain up to this point. A \pow is deemed valid if $\eta$ is a string such that
$H(h_{-1}, \eta, m) < D_p$, where $H$ is a hash
function (modeled as a random oracle, implemented in 
practice using SHA-256) and $D_p$ depends on the 
``hardness of the network'' and is set so
that the probability that an input satisfies the relation is less than $p$, where $p$ is a parameter of the network. At any point of the protocol
execution, each participant attempts to increase the length of its own chain by ``mining'' for a new
block: upon receiving some record $m$, it picks an 
arbitrary $\eta$ and checks whether $\eta$ is a valid 
\pow with respect to $m$ and $h_{-1}$, where $h_{-1}$
is a pointer to the last block of its current chain; 
if so, it extends
is own local chain and broadcast it to the all the other participants. Whenever a participant receives
a chain that is longer than its own local chain, it replaces its own chain with the longer one.

\paragraph{PoW criticism and alternatives.}
Nakamoto's large-scale permission-less consensus mechanism (a.k.a linear log) is a remarkable achievement in distributed computing, with far-reaching applications in both theory and practice. Notwithstanding, a significant setback and common point of criticism on Nakamoto's protocol (catalyzing the shift to proof-of-stake) is that the ``price" for preventing miners from ``gaming-the-system'' is ensuring that the proof-of-work has some \emph{embedded hardness} in it, making it non-trivial to solve. Indeed,  Nakamoto's \pow amounts to performing repeated hashing until some target value is hit. Presumably, there is no more efficient way to do this than mere brute-force. This causes significant expenditures in electricity and major efforts in the design and production of (otherwise useless) dedicated hardware (ASICs).
The above criticism motivated the blockchain industry to introduce the concept of Proof of Stake (\pos) as a way for individual network nodes to participate in a network by committing (or staking) some of their own cryptocurrency (i.e., protocol token) to produce blocks and earn rewards based on the amount they have staked. While \pos has become more prevalent, it has also revealed some challenges, both in the security and the economical fronts. At the moment, both main approaches (\pow and \pos), have significant disadvantages and right now the community does not have a good solution. 

\paragraph{Proof of Useful Work.} One idea that has been circulating in the academic and industrial communities is to design a so-called Proof of Useful Work (\upow). In a \upow, the solution to the given crypto puzzle serves two purposes: (1) a proof of computational effort (2) some useful computation. The two properties, intuitively, seem to be contradictory: useful computations are typically drawn from some distribution for which we cannot argue any worst-case computational hardness. This typically allows miners that want to ``game the system'' to choose instances of the puzzle that are easy to solve.  Thus, the quest for a proof of truly useful work that can be executed on existing hardware and in a practically efficient manner has been a long-standing challenge, considered by many to be impossible. 

Existing candidate proofs of useful work schemes try to go around the above apparent dichotomy by doing one of two things. Either they settle for working on a structured problem but solving ``random'' instances thereof (which are usually not useful), or they work on truly useful problems and instances (e.g., protein folding) but require using specially-designed hardware, limiting the distributed nature of the system. 

\medskip 
In this paper we completely break the above belief by showing the first proof of truly useful work. Notably, the useful computation underlying our proof of work scheme involves matrix multiplication of \textit{arbitrary matrices}, an abundant operation that lies at the heart of many computationally intensive processes, including AI workloads (training and inference), quantum simulation, vector databases, and more. This makes our scheme not only theoretically intriguing but also has high potential of making practical impact.

\fi

\section{Overview of our PoUW}\label{sec:overview}
We first recall the main properties and syntax we would like to have in a PoUW, focusing on the task of matrix multiplication. In this overview, we ignore the economic value of multiplying two matrices, being a central operation in many large scale computations (notably AI training and inference; see Section~\ref{sec:matmul_in_ai}), and focus only on the technical aspects of a PoUW. A PoUW consists of two algorithms, $\mathsf{Solve}$ and $\mathsf{Verify}$, with the following properties, stated informally:

\begin{itemize}
    \item $\mathsf{Solve}(\seed,A,B)$ gets as input a  seed $\seed$ and two $n\times n$ matrices $A,B$ (the instance). The procedure has two outputs: (1) a matrix $C$ and (2) a proof $\pi$. 

    \textit{Prover efficiency}: The running time of $\mathsf{Solve}$ is not much more than just multiplying $A$ and $B$. That is, if $t(n)$ is the time it takes to multiply $A$ and $B$, then $\mathsf{Solve}$ runs in time $t(n) \cdot(1+o(1))$.

    \textit{Proof size}: The size of $\pi$ is $o(t(n))$.

    \item $\mathsf{Verify}(\seed,\pi)$ gets as input a  seed $\seed$ and a proof $\pi$, and it outputs a bit.
\end{itemize}

Correctness of a PoUW says that $C=A\cdot B$ (with probability 1) and also $\mathsf{Verify}(\seed,\pi)=1$ with probability~$\epsilon$ (over the randomness of $\mathsf{Solve}$ and over a uniformly random $\seed$), where $\epsilon>0$ is a parameter. Hardness of a PoUW says that no one can create a procedure that runs in time $t(n) \cdot(1+o(1))$ and succeed in creating $\pi$'s that are accepted with probability noticeably higher than $\epsilon$ (notice that this procedure does not need to even multiply matrices). That is, with the same computational budget there is no way to noticeably outperform the honest prover, $\mathsf{Solve}$.

We did not put any non-trivial restriction on the running time of $\mathsf{Verify}$ in order to emphasize the challenging parts of the notion of a PoUW. Our final scheme will be quite efficient also in communication and verifier complexity; we discuss this later.

\paragraph{The key idea.} The key idea underlying our PoUW is to encode noise into the input matrices, ensuring that any attempt to ``shortcut'' the computation would require effort comparable to performing the full computation. At the same time, we design an optimally efficient random self-reduction procedure, incurring negligible computational cost. The latter is obtained by leveraging, not only the task of producing an output, but more importantly, enforcing unpredictability in the transcript of the computation.

\paragraph{A MatMul random-self-reducibility-based PoW.} To gain intuition, we design a proof of work based on multiplying two arbitrary matrices that is \emph{not useful}. We will discuss later how we make it useful with minimal computational cost. The idea is to ask the prover to compute a ``noisy'' matrix product, instead of the original one. Specifically, we require computing the following:
\begin{align}
\label{eq:random-noise}
    C' = (A+E)\cdot (B+F),
\end{align}
where the matrices $E$ and $F$ are uniformly random  $n\times n$ matrices matrices. It is obviously necessary for $E$ and $F$ to be \emph{unpredictable} by the (malicious) prover, as otherwise it could pre-process the result of \eqref{eq:random-noise}. The latter property can be achieved by deriving $E$ and $F$ by applying a random oracle on $\sigma, A, B$. Since $\sigma$ is not known to the (malicious) prover ahead of time and $\mathcal O$ is a random oracle, the output of the random oracle is random and unpredictable. We can now turn this into a proof of (non-useful) work as follows:
\begin{itemize}
    \item The prover, on input seed $\sigma$ and two matrices $A,B$, does: 
    \begin{enumerate}
        \item Compute $E,F=\mathcal O(\sigma,A,B)$.
        \item\label{item:pow_matmul} Compute $C'=(A+E)\cdot (B+F)$.
        \item Compute $z= \mathcal O(C')$.
        \item Output $\pi = (A,B,z)$.
    \end{enumerate}
    \item The verifier, on input $\seed$ and $\pi = (A,B,z)$, does:
    \begin{enumerate}
        \item Recompute $z$ from $\sigma,A,B$ to check correctness. 
        \item Check if $z$ is below a threshold, say $2^{\lambda - \log(1/\epsilon)}$ (assuming~$\epsilon$ is the inverse of a power of~2). 
    \end{enumerate}
\end{itemize}

Since $E$ and $F$ are uniformly random matrices, the matrix $C'$ is uniformly random, no matter what $A$ and $B$ are. Therefore, $z$ will be below the above-mentioned threshold with probability exactly $\epsilon$, as needed. Suppose that $t(n)$ is the time spent by the prover on matrix multiplication in step~\ref{item:pow_matmul}. Note that $t(n) = O(n^3)$ if we use classical matrix multiplication, and $t(n) = n^\omega$ if we use fast matrix multiplication, where the current record is $\omega<2.371339$~\cite{abs-2404-16349}. Besides step~\ref{item:pow_matmul}, all operations are linear time in the input size and are thus negligible. Overall, the prover's runtime in the above system is $t(n)\cdot(1+o(1))$. 

For the above system to be a PoW, we need to argue that there is no way to perform noticeably better than the honest prover. In our system, the verifier checks consistency of $z$ and so the only way to perform better is if an attacker could somehow come up with $A$ and $B$ for which it can compute $z$ faster than doing a generic matrix multiplication algorithm. Suppose otherwise: we could use such a prover to compute the product of two \emph{random} matrices in time $o(t(n))$ --- a major breakthrough in algorithms. Assuming no such speedup is possible for random matrices, we conclude that a malicious prover must run in time at least~$t(n)$, as much as the honest prover one up to additive lower-order terms.

Obviously the above PoW is not useful since the prover does not output the product of $A$ and $B$. Instead, it can only output a noisy and seemingly useless value $C'=(A+E)\cdot (B+F)$. Naively, one can add another step to the prover's algorithm in which it computes the ``noise'' term $A\cdot F + E\cdot (B+F)$ and subtracts it from $C'$ to output $C=A\cdot B$. Computing this term, however, requires {two additional generic matrix products} (plus lower-order tasks).  Thus, the honest prover would need to perform $3$ matrix products, resulting in a prover whose running time is $t(n) \cdot (3+o(1))$, while a malicious prover still needs to do only a single matrix product for the same chance of convincing the verifier. Hence, the above scheme is not a PoUW.

\paragraph{Usefulness via transcript unpredictability.}
The issue with the above PoW not being useful was that ``peeling off'' the noise was too computationally expensive. Let us explore a different method of adding noise for which peeling it off is going to be relatively easy. Instead of sampling the noise matrices $E$ and $F$ as completely uniform, we sample them as low rank with parameter $r\ll n$. It is straightforward to do this by sampling two matrices $E_L,F_L$ of size $n\times r$ and two matrices $E_R,F_R$ of size $r\times n$, and then computing $E=E_L\cdot E_R$ and $F=F_L\cdot F_R$. As a result, $E$ and $F$ are two $n\times n$ random rank $r$ matrices. While $r$ is a generic parameter, for this overview, imagine that $r=n^{0.3}$ for convenience.

Notice that ``peeling off'' the noise in $C'$ is now computationally easy. Indeed, computing $A\cdot F + E\cdot (B+F)$ (cf.\ Eq.~\eqref{eq:random-noise}) can be done by two matrix product where in each of them, one of the matrices is low rank, allowing us to compute the product in time $O(n^2\cdot r)$. This term is asymptotically negligible compared to the cost of ``standard'' matrix multiplication, i.e., $O(n^3)$ using  classical matrix multiplication or $n^\omega$ using fast matrix multiplication. 

Once we have managed to get the correct output of the matrix product, the question is whether the new scheme is even a proof of work, or perhaps because $E$ and $F$ are ``less random'' an attacker can now convince the verifier with noticeably higher probability (and with the same computational effort) than the honest prover. The latter is unfortunately indeed the case: a malicious prover can choose easy to multiply matrices $A$ and $B$, say the all 0 matrices, and then its computational task boils down to merely multiplying $E$ and $F$. Being low-rank matrices, the adversary can compute $E\cdot F$ and then feed the output to the random oracle in time roughly $O(n^2\cdot r)$ which is much faster than the honest prover, requiring time $O(n^3)$ (or $n^\omega$ using fast matrix multiplication). It may seem as if we have not made any progress, but as we explain next, this is not the case.

As we concluded above, since $E$ and $F$ are low rank, we cannot utilize the output of the computation as a source of hardness. To this end, we make the following novel observation: \textbf{instead of using \emph{the output} of the computation as the proof of work, we utilize \emph{the transcript} of the computation}. This essentially ties the hands of the attacker and forces it to follow a prescribed algorithm, as we explain next. 

Consider the algorithm that multiplies $A'=(A+E)$ and $B'=(B+F)$ using a block-variant of the classical algorithm. Specifically, we split $A'$ and $B'$ into $r\times r$ blocks and initialize an all 0s matrix ${C'}^{(0)}$. We then iteratively compute the product of $A$ and $B$ via the following formula
        \[C'^{(\ell)}_{i,j}  := C'^{(\ell-1)}_{i,j} +  A_{i,\ell}\cdot B_{\ell,j},\]
where we treat the $(i,j)$-th index of each matrix as an $r\times r$ submatrix (or block). Namely, $i,j$, and $\ell$, all range in $1,\ldots,n/r$ (and we assume $r\mid n$ for simplicity). Obviously, $C'$ consists of the blocks $\{{C'}^{(n/r)}_{i,j}\}_{i,j\in [n/r]}$. 
        
This sequence  of $(n/r)^3$ matrices $\{{C'}^{(n/r)}_{i,j}\}_{i,j\in [n/r]}$ is referred to as the transcript of the matrix product. Our proof of work is then obtained by applying a random oracle on the transcript, instead of on the output. Specifically, we suggest the following PoUW:
\begin{itemize}
    \item The prover, on input seed $\sigma$ and two matrices $A,B$, does: 
    \begin{enumerate}
        \item\label{item:upow-noise} Compute $E_L,E_R,F_L,F_R=\mathcal O(\sigma,A,B)$ such that $E=E_L\cdot E_R$ and $F=F_L\cdot F_R$ and they are of dimension $n\times n$.
        \item\label{item:upow_matmul} Compute the transcript of $C'=(A+E)\cdot (B+F)$, denoted $\mathsf{tr} = \{{C'}^{(n/r)}_{i,j}\}_{i,j\in [n/r]}$. 
        \item\label{item:upow-ro} Compute $z= \mathcal O(\mathsf{tr})$.
        \item\label{item:upow-denoise} Compute $C=C'- (A\cdot F + E\cdot (B+F))$.
        \item Output $C$ and $\pi = (A,B,z)$.
    \end{enumerate}
    \item The verifier, on input $\seed$ and $\pi = (A,B,z)$, does:
    \begin{enumerate}
        \item Recompute $z$ from $\sigma,A,B$ to check correctness. 
        \item Check if $z$ is below a threshold, say $2^{\lambda - \log(1/\epsilon)}$ (assuming~$\epsilon$ is the inverse of a power of~2). 
    \end{enumerate}
\end{itemize}

As we have already explained, the above scheme satisfies correctness in the sense that a verifier will accept an honestly generated proof with probability $\epsilon$ and also $C$ is equal to $A\cdot B$. For efficiency, notice that steps~\ref{item:upow-noise} and~\ref{item:upow-denoise} can be implemented in time $O(n^2\cdot r)$ and step~\ref{item:upow-ro} requires hashing a string of length $(n/r)^3$ which takes time $o(n)$. Step~\ref{item:upow_matmul} requires computing the whole transcript of the matrix multiplication (which includes the output, $C'$). One way to compute it is directly running the classical matrix multiplication algorithm (in time $O(n^3)$). However, there is a more efficient way using fast matrix multiplication techniques. Indeed, every intermediate $r\times r$ block can be computed via a fast rectangular matrix multiplication that can be done non-trivially in some regimes of parameters. For instance, one can multiply a $n\times n^{0.3}$ matrix with a $n^{0.3}\times n$ matrix in time $n^{2+o(1)}$~\cite{DBLP:conf/soda/WilliamsXXZ24}. Thus, in some regimes one can implement step~\ref{item:upow_matmul} (and so step~\ref{item:upow-ro}) in time $O(n^3/r)$. Depending on $r$, let us denote the current state of the art runtime of the above algorithm for step~\ref{item:upow_matmul} by~$t_r(n)$. Since the latter is the dominant term, we get that the total runtime of the honest prover is~$t_r(n)\cdot (1+o(1))$.

For hardness, we conjecture that the algorithm we present for the honest prover is essentially optimal (up to low order optimizations) for an attacker whose goal is to cause the  verifier to accept. Indeed, an attacker can choose $A$ and $B$ and completely avoid running step~\ref{item:upow-denoise}, but we conjecture that there is no noticeable speedup possible in the computation of step~\ref{item:upow_matmul}.  

To this end, we formalize an assumption saying roughly that computing the transcript of a product of two random rank $r$ matrices requires time $t_r(n)$. With this conjecture, we conclude hardness of our scheme. We now give reasons to believe this conjecture, namely, why computing the transcript requires essentially to compute every $r\times r$ intermediate matrix independently. The main point is that every intermediate ${C'}^{(\ell)}_{i,j}$ essentially depends on the product of two $r\times r$ blocks from $A'=A+E$ and $B'=B+F$, denoted $A'_{i,\ell}$ and $B'_{\ell,j}$. Since $E$ and $F$ were chosen to be uniformly random low rank matrices, we can infer that both $A'_{i,\ell}$ and $B'_{\ell,j}$ are completely uniform, marginally. Thus, computing any single intermediate does require doing a single $r\times r$ random matrix multiplication, which as we conjectured cannot be sped up (from $O(r^3)$ directly or from $O(r^\omega)$ with fast matrix multiplication). Obviously the above argument does not work for computing all of the intermediates because they are correlated (remember that $E$ and $F$ are only rank $r$). However, recovering the correlations requires essentially solving a linear system of equations which, we conjecture, is as difficult as multiplying matrices. Thus, we believe our conjecture holds, unless significant algorithmic breakthrough is obtained.  

\begin{remark}[Memory complexity]
    As described the memory consumption of the PoUW is quite significant due to the need to store the whole transcript $\mathsf{tr}$. However, there are ways to improve this. For example, instead of hashing the whole transcript, one can hash each of the $(n/r)^3$ intermediate matrices separately, and use each one of them as a potential proof for solving the puzzle. The latter, in turn, causes the number of attempts one gets in solving the puzzle scale proportionally with the size of the matrices they multiply. This is a useful feature if the system needs to support varying matrix sizes. Additionally, note that in terms of hardness, a similar argument to the above applies in this case: each intermediate matrix is the result of the product of two independent, each marginally uniformly random matrices, causing the output to be ``computationally'' unpredictable (in a fine-grained sense).
\end{remark}

\begin{remark}[Improving verifier's complexity]
 The scheme as described above is relatively expensive on the verifier's side. Indeed, without any optimizations, the verifier needs to repeat the prover's work to get convinced that $\pi$ is ``valid''. Using standard tools we can improve this. Indeed, we can delegate the verifier's work (when $\mathsf{Verify}$ should accept) to the prover by asking the latter to compute a SNARK certifying that the verifier would have accepted the proof had it seen it. Furthermore, if privacy of $A,B$ is an issue, the prover can use a zkSNARK.
\end{remark}

\begin{remark}[Using SNARKs as a source of hardness?]
The above paragraph leads to a natural idea of using the SNARK computation itself as a source of hardness. Indeed, we could ``force'' the prover to follow a specific algorithm by asking it to generate a SNARK proof attesting to this fact. For this purpose one could avoid noising and denoising altogether if the SNARK computation itself is keyed (as in Micali's~\cite{DBLP:conf/focs/Micali94} random oracle-based SNARK).  This indeed forces the prover to work and prevents a malicious prover from doing shortcuts. The problem is that we typically think of $\epsilon$, the probability that $\mathsf{Solve}$ outputs an accepting proof, as being small. Nevertheless, the above approach requires the prover to compute a SNARK \emph{every time}, making the efficiency property of the scheme quite poor. (SNARKs require significant time to compute.) In our suggestion, on the other hand, only when the (rare) event that a valid $\pi$ is found, then we expect the solver to generate the (zk)SNARK, and so the cost of this computation is amortized and becomes negligible.
\end{remark}

\section{Open Problems and Future Directions}\label{sec_discussion}

The core idea in this paper (in particular, the PoUW Scheme in Algorithm \ref{fig:NoiseDeNoise2})
is performing PoW on \emph{intermediate} 
computations as opposed to the \emph{output} of the task $A\cdot B$. Once one subscribes to this idea and to a stronger security assumption (direct-product of random but correlated subproblems), it is natural to ask 
 how general is our approach: 

\begin{problem}
    What \emph{other} computational tasks $f(x)$ beyond MatMul admit an efficient ($1+o(1)$ multiplicative overhead) PoUW scheme? Can we characterize the properties of functions $f(x)$ for which an efficient PoUW scheme exists?
\end{problem}
Some natural and interesting candidates are solving linear equations, graph search and routing problems, and database problems like pattern matching and string problems (e.g., DNA sequencing). Some of these functions lack a natural  \emph{decomposable} structure into sub-problems $f(x) = \bigotimes g(x_i)$ like MatMul, so it is less clear how to use our notion of random self-reducibility on subproblems.

\paragraph{Alternative Noise schemes for MatMul PoUW}
Our PoUW scheme in Algorithm \ref{fig:NoiseDeNoise2} is merely 
one simple instance of an efficient (sub-cubic overhead) random self-reduction for MatMul, but there is in principle ``nothing holy" about using \emph{low-rank} noise -- Indeed, all we really need  the \emph{noise structure} to satisfy is that it is: (i) easy to add and remove ($o(n^3)$), (ii) induces \emph{unpredictable marginals}   
(random) on tiles, \emph{no matter} how $A,B$ are chosen. A natural question is whether there are even more efficient (and secure) schemes:  

\begin{problem}
Is there a better PoUW scheme for MatMul than Algorithm \ref{fig:NoiseDeNoise2}, either in terms of \emph{computational overhead} ($O(n^2r)$) 
or the \emph{security assumption} (ideally both)? 
\end{problem}

In Appendinx \ref{append_sec_rand_rotation} we present an alternative candidate PoUW scheme with \emph{self-canceling noise}, i.e., which \emph{preserves} the output $AB$ of the original (useful) problem, without any need for "denoising" altogether. This scheme has  
$O(n^2\log n)$ computational overhead (at least for certain matrix dimensions) instead of   
$O(n^2r)$ of Algorithm \ref{fig:NoiseDeNoise2}, but only applies to 
\emph{high-rank} matrices (typically satisfied by useful real-world matrices), and relies on a more subtle security assumption.  Nevertheless, it 
demonstrates the \emph{flexibility} of random self-reductions for MatMul, and the generality of our PoUW framework. 

\paragraph{A provably secure scheme.}
A fascinating question is whether one can design a PoUW scheme for matrix multiplication (or other ``useful'' operations) with provable security under more standard (albeit possibly strong) assumptions. For instance, is there a PoUW from general purpose program obfuscation assumptions?

\begin{problem}
Is there a PoUW scheme for useful operations from more standard or well-studied assumptions? 
\end{problem}


\if0
\subsection{Proof of Useful Work for Blockchains}
As mentioned, proofs of work are an essential building blocks in classical blockchain designs such as Bitcoin. However, to the best of our knowledge, the properties of a proof of work, needed for the security and liveness of the Bitcoin protocol have never been fully formalized. To this end, we provide a definition that we hope will become useful in the future. We present this definition only for completeness and focus in the rest of the paper on the previous, simpler to use, Definition~\ref{def:pow-relation} which is easier to understand and argue about.

In a proof of work for blockchains, we require that the event that $\mathsf{Veify}$ outputs 1 to follow a Poisson process. A Poisson process is a stochastic process that models the occurrence of random events over time or space, where these events occur independently and at a constant average rate. It is characterized by having exponentially distributed inter-arrival times between events. Bitcoin's proof of work mechanism is an example of a Poisson process wherein blocks are generated at a constant rate of 10 minutes.

\begin{definition}[Poisson process]
    A Poisson process with rate~$\rho$ guarantees that the expected number of winning lottery tickets~$N(T)$ found over time interval~$T$ is:
\[
\Pr[N(T) = k] = \frac{(\rho T)^k e^{-\rho T}}{k!}.
\] 
\end{definition}

\begin{fact}
 Consider a sequence of $n$ independent and identically distributed Bernoulli random variables $E_1,\ldots,E_n$ with parameter $p$. (That is, each $E_i$ is 1 with probability $p$ and 0 with probability $0$.) The distribution of $\sum_{i=1}^n E_i$ follows a binomial distribution $B(n,p)$. It is known that the binomial distribution converges towards the Poisson distribution as $n\rightarrow \infty$ and $np$ converges to a finite limit (i.e., if $n$ is sufficiently large and $p$ is sufficiently small).
\end{fact}

\begin{definition}[Poisson Process for $\mathsf{Verify}$]
Let $\mathsf{Solve}(\seed, \tsk)$ and $\mathsf{Verify}(\seed, \pi)$ be two procedures, as in Definition~\ref{def:pow-relation}. It is a proof of work for blockchains if the sequence of events $\{\mathsf{Verify}(\seed, \pi) = 1\}$ forms a \textbf{Poisson process} over time $t \in [0, \infty)$ with rate $\lambda > 0$.
\end{definition}

\paragraph{Conditions on $\mathsf{Verify}$:} 
Let $\mathsf{Verify}(\seed, \pi)$ be a random variable defined for tasks $\tsk \in \{0,1\}^n$ and seeds $\seed \in \{0,1\}^\lambda$. Define discrete time intervals $\{t_1, t_2, \dots, t_k\}$ where trials of $\mathsf{Verify}$ occur. We impose the following conditions:
\begin{enumerate}
    \item \textbf{Bernoulli Success Probability:} 
    For each time interval $t_i$, $\mathsf{Verify}(\seed, \pi)$ is a Bernoulli random variable with success probability $\epsilonn$, where $\epsilonn$ is fixed for a given $n$.
    \item \textbf{Independence Across Time Intervals:} 
    The outcomes of $\mathsf{Verify}$ in disjoint intervals are independent, i.e., for any two disjoint time intervals $[t_i, t_{i+1})$ and $[t_j, t_{j+1})$, we have:
    \[
    \Pr[\mathsf{Verify}(\seed_1, \pi_1) = 1 \, \wedge \, \mathsf{Verify}(\seed_2, \pi_2) = 1] = \Pr[\mathsf{Verify}(\seed_1, \pi_1) = 1] \cdot \Pr[\mathsf{Verify}(\seed_2, \pi_2) = 1].
    \]
    \item \textbf{Scaling of Trials and Success Probability:} 
    Let the number of trials per unit time, $k_t$, grow as $t \to \infty$, and let $\epsilonn$ decrease such that:
    \[
    k_t \cdot \epsilonn = \lambdaVerify,
    \]
    where $\lambdaVerify > 0$ is a constant rate representing the expected number of successes per unit time.
\end{enumerate}
Proof: Convergence to Poisson Distribution
latex
Copy code
\paragraph{Theorem:} Under the above conditions, the number of events where $\mathsf{Verify}(\seed, \pi) = 1$ in a time interval of length $t$, denoted $N_t$, follows a Poisson distribution with rate $\lambdaVerify t$:
\[
N_t \sim \mathrm{Poisson}(\lambdaVerify t).
\]

\paragraph{Proof:}
Let $k_t$ be the number of trials in a time interval of length $t$, and let $\epsilonn$ be the success probability for each trial. Then, the total number of successes $N_t$ is the sum of $k_t$ independent Bernoulli random variables with success probability $\epsilonn$:
\[
N_t = \sum_{i=1}^{k_t} X_i, \quad X_i \sim \mathrm{Bernoulli}(\epsilonn).
\]
Thus, $N_t$ follows a Binomial distribution:
\[
N_t \sim \mathrm{Binomial}(k_t, \epsilonn).
\]

By the scaling condition $k_t \cdot \epsilonn = \lambdaVerify t$, let $k_t \to \infty$ and $\epsilonn \to 0$ as $t$ increases, such that the product remains constant. For large $k_t$, the Binomial distribution $\mathrm{Binomial}(k_t, \epsilonn)$ converges to a Poisson distribution with rate $\lambdaVerify t$:
\[
\lim_{k_t \to \infty, \epsilonn \to 0, k_t \cdot \epsilonn = \lambdaVerify t} \Pr[N_t = k] = \frac{(\lambdaVerify t)^k e^{-\lambdaVerify t}}{k!}, \quad k \in \mathbb{N}.
\]

This result follows from the classical Binomial-to-Poisson convergence theorem.

\paragraph{Verification of Poisson Properties:}
The conditions ensure that the events satisfy the defining properties of a Poisson process:
\begin{enumerate}
    \item \textbf{Independence:} The independence condition ensures that events in disjoint intervals are independent.
    \item \textbf{Poisson Distribution:} The Binomial-to-Poisson convergence proves that $N_t$ follows a Poisson distribution.
    \item \textbf{Stationarity:} The rate $\lambdaVerify$ is constant across all time intervals, ensuring stationarity.
    \item \textbf{No Simultaneous Events:} For small intervals of length $\Delta t$, $k_t \cdot \epsilonn \to 0$, so the probability of multiple events vanishes.
\end{enumerate}

Thus, the events where $\mathsf{Verify}(\seed, \pi) = 1$ form a Poisson process with rate $\lambdaVerify$.
\qed
\fi

\section{Preliminaries}
\paragraph{Notation.}
For an integer $n\in \mathbb N$, we denote $[n]=\{1,\ldots,n\}$. For a set $D$, $x \leftarrow D$ denotes sampling an element from $D$ uniformly at
random. For a distribution $\mathcal D$ over a set $D$, $x\leftarrow \mathcal D$ denotes sampling an element of $D$ according
to $\mathcal D$. By $U_m$ we denote the uniform distribution over $\{0, 1\}^m$.

\paragraph{Random oracle model and complexity.}
Throughout this paper we assume the random oracle model (ROM). Proof of work protocols are easiest to define and prove secure in this model because it allows counting a query to the RO as a single operation/time unit.
Nevertheless, everything can be heuristically applied to the standard
model, by replacing the random oracle with a cryptographic hash function (say SHA-256 or BLAKE3).

In addition to supporting random oracle queries, we allow parties (as usual) to perform arithmetic operations. We assume an underlying field $\mathbb{F}_q$ that supports addition and multiplication. We count complexity of algorithm in our model by separately accounting for the number of random oracle queries they make and the number of field operations they do. 

\section{Definitional Framework for Proofs of Useful Work} 
\label{sec:pouw-def}

A proof of useful work (PoUW) is a method for a prover to convince a verifier that it performed sufficient computation to solve a particular computational task. Importantly, the task the prover chooses to solve is arbitrary, perhaps chosen by the prover itself, and does not come from any pre-specified distribution. The task may even remain private, depending on the application. Soundness requires that any (malicious) prover running sufficiently faster than the honest prover will not be able to fool the verifier to accept with high probability (even if they completely control the task they solve). More precisely, the ratio between the amount of computational effort and the probability of convincing a verifier is very close to the honest party's ratio). In other words, a proof of useful work has to satisfy the following two properties:
aim to achieve the following two desiderata simultaneously:
\begin{enumerate}
    \item \textbf{Usefulness}: Acceptable proofs can be generated by solving ``useful'' computational problems on \emph{arbitrary} inputs.  
    \item \textbf{Efficiency}: The (honest) prover needs to be almost as efficient as only solving the task (without proving).
    \item \textbf{Hardness}: Generating acceptable proofs is guaranteed to necessitate actual work, similar to the amount of work required honestly.
\end{enumerate} 

Classical proofs of work~\cite{DBLP:conf/crypto/DworkN92} only require the last property to hold, and have no associated useful computation. Usually, one can leverage classical proofs of work to generate \emph{random} instances of ``useful'' problems by sampling them from the proof of work challenge. In this way, for appropriate tasks (where average case complexity is well understood), a
prover is guaranteed to spend a certain amount of work in producing proofs. However, while the algorithm being executed for generating a proof of work is related to a concrete computational task, the instance itself is random and detached from any fixed
instance that someone may actually want to solve. Thus, such systems are not considered useful according to the above.

We formalize a proof of useful work as a protocol between a prover $P$ and a verifier $V$. At a high level, $P$ has an instance of a task that it needs to solve. Depending on the application, the instance could very well be already in $P$'s possession, or alternatively, may be given to $P$ from an external entity. Then, $P$ and $V$ engage in a protocol where eventually $P$ outputs a solution to the task and $V$ gets convinced that $P$ spent sufficient computational effort. 

It is very common to see definitions of proofs of work as non-interactive algorithms rather than interactive protocols. We may jump back and forth between the two methods, depending on the context. We remark that all of our protocols are public-coin and therefore we can use the Fiat-Shamir heuristic~\cite{FiatS86} to remove the interaction and prove security in the Random Oracle (RO) model . Thus, in what follows we assume the RO model, meaning that all parties have query access to a public random function. 

Since our proofs of work are associated with a useful computation, we assume a function $f\colon \{0,1\}^n\to\{0,1\}^n$, reflecting the computation being done during the proof of work. Above, we assume that $f$ has the same domain and range for simplicity, and all of our definitions extend to the case where the domain and range differ. Further, all of our definitions extend to the case where $f$ could have several ``legal'' outputs per input, namely, where $f$ is a relation. We assume that $f$ is a function for simplicity.
We assume that there is a canonical algorithm \textsc{Alg} for solving $f$, namely, the algorithm gets as input $\tsk\in \{0,1\}^n$ and outputs a solution $\sol\in\{0,1\}^{n}$ such that $f(\tsk)=\sol$. Denote $t(n,\tsk)$ the complexity of $\textsc{Alg}$ in solving $\tsk \in \{0,1\}^n$. We denote $t(n) = \max_{\tsk} f(n,\tsk)$ the worst-case complexity of $\textsc{Alg}$ over all inputs of size~$n$. 

Throughout we assume that $\lambda\in \mathbb{N}$ is a security parameter, and assume that $\{0,1\}^\lambda$ is the domain and range of the random oracle. One should think of  $n\gg \lambda$ where the relationship between the two is via some polynomial. Syntactically, the definition of a PoUW involves two oracle-aided algorithms:
\begin{enumerate}
    \item   $\mathsf{Solve}(\seed,\tsk)$: is a probabilistic algorithm that takes as input a seed $\seed\in\{0,1\}^\lambda$ and an instance $\tsk\in \{0,1\}^n$, and outputs a solution $\sol\in \{0,1\}^n$ and a proof $\pi\in\{0,1\}^*$.
    \item $\mathsf{Verify}(\seed, \pi)$: is a deterministic algorithm that takes a seed $\seed\in\{0,1\}^\lambda$ and a proof $\pi\in\{0,1\}^*$ as input, and outputs 0 (for failure) or 1 (for success).
\end{enumerate}

The role of $\seed$ is to guarantee
that the computational work spent in order to ``solve'' a task is ``fresh,'' i.e., executed when $\seed$ became publicly known to avoid preprocessing. One can further assume that there is a procedure $\mathsf{Keygen}$ that outputs $\seed$, and also possibly a $\mathsf{GenPP}$ procedure that outputs public parameters $\mathsf{pp}$ that all algorithms get as input. We simplify and avoid formally stating these for succinctness and clarity. 

We capture the efficiency, completeness, and soundness properties in the next definition. 

\begin{definition}[Proof of useful work]
\label{def:pow-relation}
An $\epsilon$-\underline{difficulty} and $\alpha$-\underline{overhead} PoUW for $f$ consists
of two algorithms $\mathsf{Solve}$ and $\mathsf{Verify}$, satisfying the efficiency, completeness, and hardness properties, described next.

\medskip
\noindent \textbf{Efficiency}: For any $\seed\in\{0,1\}^\lambda$ and any $\tsk\in \{0,1\}^n$:
\begin{itemize}
    \item  $\mathsf{Solve}(\seed,\tsk)$ runs in time $t(n)\cdot(1+\alpha(n))$.
    \item $\mathsf{Verify}(\seed,\pi)$ runs in time $c\cdot t(n)$ for constant $c\in (0,1)$ for any  possible output $\pi$ of $\mathsf{Solve}$.
\end{itemize}

\medskip
\noindent \textbf{Completeness}: For every $\lambda,n\in\mathbb N$, every $\seed\in\{0,1\}^\lambda$, and every $\tsk\in\{0,1\}^{n}$, it holds that 
$$\Pr[f(\tsk)=\sol]=1 \;\text{and}\; \Pr[\mathsf{Verify}(\seed,\pi)=1]\ge \epsilon(n),$$
where the probabilities are over the randomness of $\sol,\pi \leftarrow \mathsf{Solve}(\tsk,\seed)$ and over the random oracle.

\medskip
\noindent \textbf{Hardness}: 
There exists a constant $C>0$ such that for any algorithm $\mathsf{Solve}^\star$ running in time $t'= t(n)$ with $t'(n)\le t(n)$ with polynomial time preprocessing, it holds that
    $$\Pr\left[\mathsf{Verify}(\seed,\pi)=1\right]< \max \left \{C\cdot \frac{t'\cdot\epsilon(n)}{t}, 2^{-\lambda}\right \},$$
    where the probability is over the randomness of $\sigma\leftarrow \{0,1\}^\lambda$ sampled uniformly at random, computing $\pi \leftarrow \mathsf{Solve}^\star(\seed)$, and over the random oracle.

\end{definition}

One can introduce stronger notions of security, considering the success probability of a malicious prover when working on multiple instances in parallel (i.e., whether amortization exists), or when the adversary receives challenges in a streaming fashion and it only needs to show slight advantage over the naive process. The latter property is essentially what is needed for Bitcoin-style blockchains, and we formalize the security definition in Appendix~\ref{sec:poisson}.
     We conjecture that our PoW mechanism will satisfy all properties satisfied by the PoW of Bitcoin, and focus in the main body on the simpler definition to simplify presentation and convey the main new ideas.
We conclude this section with several remarks about the definition. 

\begin{remark}[Tunable difficulty]
In applications it is often necessary to be able to tune the hardness of the problem to solve. For example, in Bitcoin, the more energy the prover invests, the higher are the chances of convincing the verifier; moreover, once in a while the overall difficulty of mining becomes higher, requiring more energy to mine a block. 

This can be achieved in two ways. One way is to have $\epsilon$ fixed, say $0.1$, and vary the size of the tasks $\tsk$ the prover need to solve. Another way is to fix $n$ and tune $\epsilon$, requiring to solve ``more instances'' to achieve the same probability of convincing the verifier. 

In our final protocol, both $n$ and $\epsilon$ are tunable, and so one can either vary $n$ (choose smaller or larger instances to solver) or vary $\epsilon$ (require less or more probability of success) to tune the difficulty of the prover. In the above definition we treat $\epsilon$ as a fixed function only for simplicity.
\end{remark}

\begin{remark}[One vs.\ two-sided correctness]
    We require that the (honest) prover outputs the correct $\sol$ for each $\tsk$ it wants to solve (with probability 1). We do so for simplicity and because our construction will satisfy this strong requirement. However, it is plausible to consider weaker definition where, either a correct solution or an approximation thereof is outputted with reasonably high probability.
\end{remark}

\begin{remark}[Prover's efficiency]
    The efficiency of $\mathsf{Solve}$ is parametrized via a function $\alpha$ to which we refer as \emph{overhead}. Obviously, the overhead should be as small as possible, so as to incentivize running $\mathsf{Solve}$ instead of solving $f$ and trying to find $\pi$ separately. Indeed, the latter already gives a PoUW with constant multiplicative overhead (that depends on $\epsilon$). See Section~\ref{sec:genericPoWR} for details. Thus, the interesting regime is where $\alpha(t(n)) \ll t(n)$ and ideally $\alpha(n)=o(t(n))$. The latter is achieved by our protocol.
\end{remark}

\begin{remark}[Additional constraints]
    One could impose additional constraints on the efficiency of the protocol, e.g., that the proof $\pi$ sent from $P$ to $V$ is very short, or that the verifier's running time is sufficiently small. As we elaborate later in the paper, one can generically achieve these properties by using off-the-shelve tools such as succinct non-interactive arguments (SNARGs). (Note that computing the SNARG is only done when a ``valid'' $\pi$ is found, and so the complexity of computing the SNARG is amortized away.)
\end{remark}

\subsection{A (Trivial) PoUW for Any Function}
\label{sec:genericPoWR}
We show that a PoUW for any function $f$  with an associated $\textsc{Alg}$ running in time~$t(n)$ on instances of size~$n$ exists, albeit this scheme is not very useful since its associated $\alpha$ is large. Specifically, there is a gap between the complexity of $P$ and of the associated algorithm, and also there is a $P^\star$ that has significantly less complexity than $P$ and can generate ``valid'' $\pi$'s. At a very high level, the scheme is doing a computation plus proof of work, independently, as follows.

\begin{itemize}
    \item $\mathsf{Solve}$: on input $\seed$ and $\tsk$, the prover solves the task 
 by running $\textsc{Alg}$ on $\tsk$ to get $\sol$. Also, independently, it does a proof of work to convince the verifier that it spent enough time.  Specifically, for a  constant $c$ or perhaps slightly super-constant (see discussion below), it chooses $T=c \cdot t(n)$ nonces $\omega_1,\ldots,\omega_{T}$ for the random oracle, and computes $z_i = \mathcal O(\seed, \omega_i)$. It then sets $\pi$ to be the $\omega_i$ for which $z_i$ starts with the most $0$'s.

 
 \item $\mathsf{Verify}$: 
 on input $\seed$ and $\omega$, the verifier outputs~$1$ if and only if $\mathcal O(\seed, \omega)$ has $\log (t(n))$ leading~$0$'s.  

 The verifier's running time is thus $\tilde O(1)$.
\end{itemize}

\begin{claim}[A PoUW for $f$]
    The above PoUW protocol with overhead $\alpha=c$ and difficulty $\epsilon=1-e^{-c}$.
\end{claim}
\begin{proof}
    Since $\mathcal{O}$ is a random oracle, its output on previously-unqueried  points is uniformly distributed in $\{0,1\}^\lambda$. Overhead is $\alpha=c$ directly by construction. Furthermore, the probability of success for a single query is $p = 1/t(n)$. The probability that all $T$ attempts fail, is $(1-p)^{c\cdot t(n)} \le e^{-c}$, so completeness holds. Hardness follows  from the random oracle property: A method for ``guessing'' a nonce that leads to an output that has $\log T$ leading 0s amounts to ``guessing'' the output of a random oracle without even making the appropriate query.
\end{proof}

    The protocol is not useful in applications: either $P$ has linear overhead over the complexity of \textsc{Alg}, or an adversarial $P^\star$ can gain linear advantage in running time over $P$.
    Consider the following malicious prover $P^\star$: $P^\star$ does not multiply any two matrices, and instead only finds the good nonce out of $T$ arbitrary ones. Clearly, this $P^\star$ has exactly the same probability of convincing $V$ as $P$, but it is more efficient. Concretely, in this protocol protocol  $t_{P^\star}(n)=c\cdot t(n)$ which is a factor $(1+1/c)$ faster than $t_{P}(n)$. It is possible to make the advantage of $P^\star$ sub-linear by setting $c=\omega(1)$, but this takes the other issue to the extreme: the honest prover $P$'s complexity is a multiplicative factor $(1+c)$ larger than the complexity of $\textsc{Alg}$. 
\section{A PoUW for Matrix Multiplication}

In this section we present a PoUW protocol for the matrix multiplication task. In this context, the task consists of two $n\times n$ matrices $A$ and $B$ with entries integers in $\mathbb F_q$ (we assume field operations), and the solution is an $n\times n$ matrix $C$ being the product of $A$ and $B$ with entries also being in $\mathbb F_q$.\footnote{The protocol can be extended to rectangular matrix multiplication, but again we favor simplicity of presentation.} 

This section is structured as follows. We first present the canonical matrix multiplication algorithm that we assume (referred to as \MatMul). Our protocols are roughly based on the random self reducibility property of matrix multiplication. Our first protocol is rather straightforward, and while being better than the generic one in Section~\ref{sec:genericPoWR}, it is still not an optimal PoUW. Our second protocol combines the idea of random self reducibility of matrix multiplication together with an additional structural property of the algorithm we use. This protocol, as we show, is an optimal PoUW under a new algorithmic conjecture. 

\subsection{The Canonical \MatMul Algorithm}\label{sec:canonical_matmul}
 For concreteness, we describe the textbook algorithm for matrix multiplication, denoted \MatMul.  In this algorithm, the $(i,j)$ entry of the output is computed by performing an inner product  between the $i$-th row in the left matrix and the $j$-th column in the right matrix. We generalize this algorithm to operate on $r\times r$-size blocks (or so-called ``tiles''), instead of on single entries.  See Algorithm~\ref{fig:matmul} for the pseudocode. 

\begin{myalg}[Canonical Matrix Multiplication Algorithm]\label{fig:matmul}
\noindent\underline{$\MatMul_r(A,B)$}:   \quad $\sslash$ $A\in {\mathbb F}_q^{n\times k}, \quad B\in {\mathbb F}_q^{k\times m}, \quad r\in [n], \quad r | n \text{ and } k \text{ and } m$  

    \begin{enumerate}
        \item Initialize an all-$0$ matrix $C^{(0)}\in [-q, q]^{n\times m}$.
        \item Partition $A$, $B$, and $C$ into blocks of size $r\times r$. The $(i,j)$th block of a matrix $X\in \{A,B,C\}$ is denoted $X_{i,j}$.
        
        \item For each $i\in [n/r]$, $j\in[m/r]$, and $\ell\in[k/r]$, compute 
        \[C^{(\ell)}_{i,j}  := C^{(\ell-1)}_{i,j} +  A_{i,\ell}\cdot B_{\ell,j},\]
        where $A_{i,\ell}\cdot B_{\ell,j}$ can be computed via any classical matrix multiplication algorithm.
        \item  Output $C^{(n)}$.
    \end{enumerate}
\end{myalg}

The complexity of $\MatMul_r$, when multiplying an $n\cdot k$ matrix with a $k\times m$ matrix,  is governed by the computation of the  $nkm/r^3$ ``intermediate' values.  When we implement the computation of the intermediate values naively, we obtain an algorithm with  complexity 
     $t_\MatMul(n,k,m) = O(n\cdot k\cdot m)$.
To simplify notation, when we multiply two $n\times n$ matrices, i.e., $n=k=m$, we write $t_\MatMul(n) = t_\MatMul(n,k,m)$. We assume throughout that $r$ divides $n,k,m$.

\paragraph{$\MatMul_r$ transcript.} 
The transcript of a $\MatMul_r$ (Algorithm~\ref{fig:matmul}) execution consists of all $nmk/r^3$ intermediate $r\times r$ matrices $\{C^{(k)}_{i,j}\}_{i\in [n/r],j\in [m/r],\ell\in [k/r]}$ during the computation of $A\cdot B$ for $A\in {\mathbb F}_q^{n\times k}$ and $B\in {\mathbb F}_q^{k\times m}$.  
\begin{definition}[Transcript]
    The transcript of $\MatMul_r(A,B)$ consists of 
    \[\Tr(\MatMul_r,A,B) = \left\{C^{(k)}_{i,j}\right\}_{i\in [n/r],j\in [m/r],\ell\in [k/r]}.\]
\end{definition}

Evidently, in general, there are $nmk/r^3$ intermediate matrices, each of which is of size $r^2$. We give some examples, assuming that  $n=m=k$ for simplicity. If  $r=n$, there is exactly 1 intermediate state and it is the output of the computation. Alternatively, if $r=n/2$, there are 8 intermediate states, two states per one of the 4 sub-rectangles of $C$. Lastly, if $r=1$, then there are $n^3$ intermediate matrices, each being a single value in ${\mathbb F}_q$.

\paragraph{Algorithms for computing the intermediates.}
As mentioned, one can compute all of the $nkm/r^3$ intermediate matrices using a direct matrix multiplication algorithm, requiring $O(r^3)$ work per intermediate matrix, resulting with overall complexity $O(nmk)$. However, using fast matrix multiplication techniques one can do better. 

The first idea is to use fast square matrix multiplication algorithms. Specifically, every intermediate value is obtained by adding to a previously calculated intermediate value the product of two $r\times r$ matrices. Since this can be done in time $r^\omega$ where $\omega$ is the exponent  of matrix multiplication (currently standing at $\omega = 2.371552$~\cite{DBLP:conf/soda/WilliamsXXZ24}), we obtain an algorithm with overall complexity $O(nmk/r^{3-\omega})$, beating the naive one.

A somewhat more efficient approach in many cases is to utilize fast rectangular matrix multiplication algorithms. Specifically, consider the matrix induced by the values $C_{i,j}^\ell$ for a fixed $\ell$. Let us focus on $\ell=1$ for concreteness. This matrix is, in fact, the product of the left-most column-strip of $A$ (when we view the matrix as having $r\times r$ atomic blocks as elements) and the top-most row-strip of $B$ (essentially computing the outer product of these vectors). We shall denote $\omega_{m}$ the exponent 
 of the best matrix multiplication algorithm for multiplying $n\times m$ and $m\times n$ matrices for $m<n$. We get an algorithm that runs in time $(n/r)\cdot n^{\omega_r} = n^{\omega_r+1}/r$,  a term that could be $ o(n^3)$.  As a concrete setting, the current record says that computing an $n\times n^k$ times an $n^k\times n$ matrix for all $k\le  0.321334$ can be done in time $n^{2+o(1)}$~\cite{DBLP:conf/soda/Gall24,DBLP:conf/soda/WilliamsXXZ24}. That is $\omega_k = 2$ for all $k\le  0.321334$. For $r=n^{0.3}$, this results with an algorithm that works in time $n^{2.7+o(1)}$ for computing all of the intermediates.

To conclude, there are several ways to compute the intermediate values, some of which are asymptotically more efficient than the direct method. However, in practice the most efficient algorithm for reasonable values of $n$ is still the direct one because the other ones have either large hidden constants or require significant resources other than compute (e.g., much more memory accesses or an inherent sequentiality).

\subsection{A Transcript Unpredictable Encoding Scheme} \label{sec:tue}
Our template relies on a generic invertible operation we call a transcript unpredictable encoding scheme. The goal of the encoding procedure of the scheme is to inject noise and perturb the input matrices so that the  transcript of the perturbed matrices has non-trivial entropy. The inverse operation allows to ``peel off'' the injected noise and recover the original output. Both operations are parametrized with $r\in [n]$, the same parameter that appears in $\MatMul_r$. The scheme consists of two deterministic algorithms $(\Noise,\DeNoise)$ with the following syntax:
\begin{itemize}
    \item $A',B'\Leftarrow\Noise_r(\seed,A,B)$: On input a seed $\seed$ and two matrices $A$ and $B$, it outputs two matrices~$A'$ and~$B'$.
    \item $C\Leftarrow \DeNoise_r(\seed,C')$: On input a seed $\seed$ and a matrix $C'$, the procedure outputs a matrix $C$.
\end{itemize}

We now formalize the properties of a transcript unpredictable encoding scheme $(\Noise,\DeNoise)$.  Correctness says that one can ``peel off'' the noise that is added during the $\Noise$ operation and recover the original product of $A$ and $B$. The other (novel) feature, called \emph{transcript unpredictability}, says that the only way to learn a piece of the transcript of the computation is to directly compute it. Satisfying each of the two properties separately is trivial, and the challenge is of course to have them hold at the same time.

\begin{definition}[Transcript-unpredictable encoding]\label{def:tasnscript_unpredictable_encoding}
    A transcript unpredictable encoding scheme $(\Noise,\allowbreak\DeNoise)$ satisfies the following properties:

    \medskip\textbf{Completeness}:
    For every two matrices $A$ and $B$, and auxiliary input $r$, it holds that 
    \begin{align*}
        \Pr_{\seed}[\DeNoise_r(\seed, A'\cdot B') = A\cdot B \;\colon\; A',B' = \Noise_r(\seed,A,B) ]=1.
    \end{align*}

\medskip\textbf{$\epsilon$-transcript unpredictability}:
    There is a constant $C>0$ such that for every two $n\times n$ matrices $A,B$, and every  $r$, every \textbf{iterative numerical} algorithm $\mathcal A$ that runs in time $t=t(n)$ cannot compute the transcript of the product of the noisy matrices $A',B'$, except with probability~$C\cdot t/t_\MatMul\cdot \epsilon$. That is,
\begin{align*}
\Pr\left[ 
\begin{aligned}
&\forall i,j,k\in [n/r] \colon  \\ & \quad C^{(k)}_{i,j} = Z^{(k)}_{i,j}
\end{aligned} \,\middle\vert\, 
\begin{aligned}
    &A',B'=\Noise_r(\seed,A,B)  \\
    & \{C^{(k)}_{i,j}\}_{i,j,k\in [n/r]} = \Tr(\MatMul_r,A',B')\\
    &\mathcal \{Z^{(k)}_{i,j}\}_{(i,j,k)\in [n/r]} = \mathcal A(A',B')\\
\end{aligned}
\right] \le \frac{C\cdot t(n)}{t_\MatMul(n)}\cdot\epsilon(n),
\end{align*}
where the probability is over the choice of $\sigma\leftarrow \{0,1\}^\lambda$ and the internal randomness of $\mathcal A$.
\end{definition}

We refer to Appendix~\ref{appx:ue} for a more general abstraction (that applies to any function, and not only matrix multiplication), termed \emph{unpredictable encodings}, that we believe is of independent interest.

\if0
We further extend the above definition to capture the possibility of an attacker to amortize work over a sequence of challenges. 
\begin{definition}[Nonamortizable transcript-unpredictable encoding]
    A nonamortizable transcript unpredictable encoding scheme $(\Noise,\allowbreak\DeNoise)$ satisfies completeness as in Definition~\ref{def:tasnscript_unpredictable_encoding} above, and additionally it satisfies the following definition.

\medskip\textbf{$\epsilon$-nonamortizable transcript unpredictability}:
    For every polynomial $\ell(\cdot)$, every sequence of pairs of $n\times n$ matrices $(A_1,B_1),\ldots,(A_{\ell(n)},B_{\ell(n)})$, and every  $r$, every \textbf{iterative numerical} algorithm $\mathcal A$ that runs in time $\ell(n)\cdot o(t_\MatMul(n))$ cannot compute the transcript of all products of the noisy matrices, except with probability~$\epsilon$. That is,
\begin{align*}
\Pr\left[ 
\begin{aligned}
&\forall i,j,k\in [n/r], \iota \in [\ell(n)] \colon  \\ & \quad C^{(\iota),(k)}_{i,j} = Z^{(\iota),(k)}_{i,j}
\end{aligned} \,\middle\vert\, 
\begin{aligned}
& \forall \iota \in [\ell(n)]\colon A_{\iota}',B_{\iota}'=\Noise_r(\seed_\iota,A_{\iota},B_{\iota})  \\
    & \{C^{(\iota),(k)}_{i,j}\}_{i,j,k\in [n/r],\iota\in [\ell(n)]} = \Tr(\MatMul_r,A',B')\\
    &\mathcal \{Z^{(\iota),(k)}_{i,j}\}_{(i,j,k)\in [n/r],\iota\in [\ell(n)]} = \mathcal A(A'_{1},B'_{1},\ldots,A'_{\ell(n)},B'_{\ell(n)})\\
\end{aligned}
\right] \le \epsilon,
\end{align*}
where the probability is over the independent choice of $\sigma_1,\ldots,\sigma_{\ell(n)}\leftarrow \{0,1\}^\lambda$ and the internal randomness of~$\mathcal A$.

\paragraph{Fast rectangular matrix multiplication.}
There is a way to multiply two rectangular $n\times n^k$ and $n^k\times n$ matrices in time asymptotically less than $O(n^{2+k})$. Concreteley, 

\end{definition}
\fi
\subsection{From Transcript Unpredictability to a PoUW} 
An implication of transcript unpredictability is that computing the whole transcript requires (essentially) to perform the direct computation in time $t_\MatMul(n)$. This directly gives us a PoUW, as described next.

\begin{myalg}[A PoUW for Matrix Multiplication with Parameter $r$]\label{fig:functionalmatmul}
   \noindent\underline{$\mathsf{Solve}(\seed,A,B)$}:  \quad $\sslash$ $A,B\in {\mathbb F}_q^{n\times n}$

\begin{enumerate}
    \item Compute $A',B'=\Noise_r(\seed,A,B)$.
    \item 
    Computes $ C'=\MatMul_r(A', B') $. 
    
    Denote $z=\{C^{(k)}_{i,j}\}_{i,j,k\in [n/r]}$ the intermediate $r\times r$ matrices. 
\item Compute $C := \DeNoise_r(\seed,C')$.

\item Output $(C,\pi)$ where $\pi=(A,B,z)$.
\end{enumerate}
\noindent\underline{$\mathsf{Verify}(\seed,\pi)$}:
\begin{enumerate}   
\item Parse $\pi = (A,B,z)$. 
\item Recomputes from $\seed, A,B$ the correct value of $z$, and output~$1$ if and only if $z$ is correct.
    \end{enumerate}
\end{myalg}

\paragraph{Correctness.}
It is straightforward to verify correctness. Specifically, the output of $P$ is always $C=A\cdot B$ directly by the correctness of the algorithms $\Noise$ and $\DeNoise$. Also, the verifier accepts because it essentially repeats the computation of the honest $P$, necessarily ending up with the same $z$.

\if0
For efficiency, we account for the complexity of each step of the worker $W$, resulting with the following formulas:
\begin{itemize}
    \item \textbf{Field operations}: $\mathsf{Time}_{\MatMul_r}(n^2)=n^3$.
    \item \textbf{RO input size}: $O(n^3/r^3)$. 
    \item \textbf{Other}: $\mathsf{Time}_{\Noise}(2n^2) + \mathsf{Time}_{\DeNoise}(n^2)$.
\end{itemize}
\fi

\paragraph{Hardness.}
 We now argue that any algorithm that runs in time strictly less than $t_\MatMul(n)$, cannot convince the verifier to output 1, except with small probability. Suppose that there is a $P^\star$ that runs in time less than~$t_{\MatMul}(n)$. The claim follows directly form the $\epsilon$-transcript unpredictability property of the $(\Noise,\DeNoise)$ operations, concluding that with probability at most $\epsilon$ such a $P^*$ will fool the verifier, concluding the claim.

\subsection{A First Instantiation}
Our first instantiation of the $\Noise$ and $\DeNoise$ operations is quite standard and relies on the classical random self-reducibility of matrix multiplication.  In this instantiation we use $r=1$.

\begin{myalg}[First Implementation of $\Noise$ and $\DeNoise$]\label{fig:NoiseDeNoise1}
    \noindent \underline{$\Noise_r(\seed,A,B) $}: $\quad \sslash  \quad A,B\in {\mathbb F}_q^{n\times n}$
    \begin{enumerate}
        \item Parse $\seed = E,F$,
        where $E,F \in {\mathbb F}_q^{n\times n}$.
        \item Output $A' = A+E$ and $B'= B+F$.
    \end{enumerate}

    \noindent \underline{$\DeNoise_r(\seed,C') $}: $\quad \sslash  \quad C'\in {\mathbb F}_q^{n\times n}$
    \begin{enumerate}
    \item Parse $\seed = E,F$,
        where $E,F \in {\mathbb F}_q^{n\times n}$.
    \item Compute $C'' = A\cdot F + E\cdot (B+F)$ (using two invocations of $\MatMul_r$ and a matrix addition).
    \item Output $C=C'-C''$.
    \end{enumerate}
\end{myalg}

\medskip
\noindent It is obvious that completeness holds as $$C' - C'' = (A+E)\cdot (B+F) - A\cdot F + E\cdot (B+F)  = A\cdot B.$$  
(notice that the $E$ and $F$ matrices are consistent between the two procedures).
We now discuss transcript unpredictability. Since $r=1$, the transcript of the computation consists of only the final output $C'$. For trace unpredictability, we essentially assume that matrix multiplication for random matrices requires $\Omega(n^\omega)$ time where $\omega$ is the matrix multiplication exponent. We formalize this next.
\begin{assumption}
    There is no algorithm to compute the product of  two random $n\times n$  matrices in time  $o(n^{\omega})$.
\end{assumption}

\paragraph{Efficiency of PoUW.}
The obvious downside is that a malicious solver can skip the $\DeNoise$ operation altogether and thereby save two matrix multiplications. In other words, the honest solver does at least three times the work it should have done only for computing $\MatMul_r$, i.e., the protocol's overhead is $\alpha\ge 3$. On the other hand, there is a $\mathsf{Solve}^*$ that does essentially only $t_{\MatMul}(n)$ work, concluding that the protocol is not useful.

\if0
\paragraph{Non iterative numerical algorithms.} It is well known that there are better than $n^3$ algorithms for matrix multiplication, including Strassen's (whose running time is $O(n^{\log_2 7}) = O(n^{2.807})$, or XXX based algorithms that runs in time $O(n^{2.30...})$. The latter class of algorithms are by now completely theoretical and do not surpass the naive $n^3$ algorithm for reasonable values of $n$. Strassen's algorithm is considered somewhat practical in particular settings, but the community still hasn't been able to use it to surpass the efficiency of the naive $n^3$ algorithms for reasonable values of $n$, mostly due to the fact that this algorithm is less amenable to parallelization and requires somewhat large memory (due to the recursive structure). 

The cross-over point for $n$'s where Strassen might outperform naive matrix multiplication depends on several parameters, including the GPU's architecture, memory bandwidth, and efficiency of implementation.  Real-world benchmarks suggest $n$ needs to be in the range of tens of thousands or more for Strassen to compete with or outperform block multiplication, even without accounting for numerical stability or implementation challenges.
\fi

\subsection{A Second Instantiation}
Our second instantiation of the $\Noise$ and $\DeNoise$ operations also relies on the classical random self-reducibility of matrix multiplication, however, here the unpredictability assumption is more novel.  In this instantiation we use $r$ as a parameter and explain how it can be instantiated below.

\begin{myalg}[Second Implementation of $\Noise$ and $\DeNoise$]\label{fig:NoiseDeNoise2}
    \noindent \underline{$\Noise_r(\seed,A,B) $}: $\quad \sslash \quad\sigma\in \{0,1\}^\lambda, \quad A,B\in {\mathbb F}_q^{n\times n}$
    \begin{enumerate}
        \item Parse $\seed = E_L,E_R,F_L,F_R$ as four matrices, where $E_{L},F_{L} \in {\mathbb F}_q^{n\times r}$ and $E_{R},F_{R} \in {\mathbb F}_q^{r\times \cd}$.
        \item Compute 
        $E = E_L\cdot E_R \text{ and } F = F_L\cdot F_R$ (invoking $\MatMul$ twice).
        \item Output $A' = A+E$ and $B'= B+F$.
    \end{enumerate}

    \noindent \underline{$\DeNoise_r(\seed,C') $}: $\quad \sslash \quad\sigma\in \{0,1\}^\lambda, \quad C'\in {\mathbb F}_q^{n\times n}$
    \begin{enumerate}
    \item Parse $\seed = E_L,E_R,F_L,F_R$ as four matrices, where $E_{L},F_{L} \in {\mathbb F}_q^{n\times r}$ and $E_{R},F_{R} \in {\mathbb F}_q^{r\times \cd}$.
    \item Compute $C'' = (A\cdot F_L)\cdot F_R + E_L\cdot (E_R\cdot (B+F_L\cdot F_R))$  (invoking $\MatMul$ 5 times).
    \item Output $C=C'-C''$.
    \end{enumerate}
\end{myalg}

\medskip
\noindent It is obvious that completeness holds as 
$$C'' = A\cdot F + E\cdot (B+F)$$ and therefore  
$$C' - C'' = (A+E)\cdot (B+F) - A\cdot F + E\cdot (B+F)  = A\cdot B.$$  
(notice that the $E$ and $F$ matrices are consistent between the two procedures).
We now discuss transcript unpredictability. Here, $r$ is a parameter and so the transcript of the computation consists of $(n/r)^3$ intermediate $r\times r$ matrices.  Further, $E$ and $F$ are completely random conditioned on being rank $r$ (see Section~\ref{sec:low-rank-lemmas} for a proof). The question is whether it is possible to compute the transcript of $\MatMul_r(A,B)$ without essentially performing the computation as we described  in Section~\ref{sec:canonical_matmul}. We formalize this next.
\begin{assumption}\label{conj_DP_LR_scheme}
    There is no algorithm to compute all the intermediate values when multiplying two random $n\times n$  rank-$r$ matrices running in time  $o(n^{\omega_r+1}/r)$.
\end{assumption}

Recall that the algorithm that we described in Section~\ref{sec:canonical_matmul} can be used to multiply any two matrices in $O(n^{\omega_r+1}/r)$ time, and in particular, to multiply random rank $r$ ones. One can wonder if there is a way to get a speedup by utilizing the low rank property. Suppose that the matrices one wants to multiply are denoted $A$ and $B$, and that the low rank decomposition of them is $A=A_L\cdot A_R$ and $B=B_L\cdot B_R$, where $A_L,B_L$ are $n\times r$ and $A_L,B_L$ are $r\times n$. Now, one can generate all the intermediates by computing a partial sum  $T_z=\sum_{i=1}^{z} A_R\cdot B_L$ in blocks of size $r$ (so $z$ ranges in $1,\ldots,n/r$), and then multiplying $A_L\cdot T_z \cdot B_R$. Each of the latter can be done by fast rectangular matrix multiplication, costing $n^{\omega_r}$, but there are $n/r$ different values of $z$. So, overall, we obtain the same running time of $O(n^{\omega_r+1}/r)$, as the general purpose algorithm. Using the state of the art rectangular fast matrix multiplication, we can see that the above assumption is true (because the number bits needed to write down all  intermediates -- $\Theta(n^3/r)$ -- is always a lower bound on the running time and $\omega_r=2$ for sufficiently small $r$.

\paragraph{Efficiency of PoUW.}
Here, as opposed to all of our previous suggestions, the PoUW is useful. Indeed, the honest solver $\mathsf{Solve}$ is doing exactly one product of two $n\times n$ matrices, as all of the products in $\Noise$ and $\DeNoise$ are products of $n\times n$ and $n\times r$ matrices, that cost $O(n^2\cdot r)$. Overall, by a direct analysis, the overhead is
$$ \alpha(n) = O(n^2 + n^2 \cdot r)/t_{\MatMul}(n).$$
Note that $\alpha(n)=o(1)$ (unless fast matrix multiplication is used). By our assumption on the fastest way to compute the intermediate matrices (see Section~\ref{sec:canonical_matmul}, there is no algorithm that runs in time $o(t_{\MatMul}(n))$, making our protocol a truly useful PoW.

\subsubsection{Properties of our Noise Matrices}\label{sec:low-rank-lemmas}
The method of generating noise matrices described in Algorithm~\ref{fig:NoiseDeNoise2} results with a uniformly random rank~$r$ matrix. Let $\mathcal E_{r,n}$ be the set of all $n\times n$ rank $r$ matrices. 
\begin{lemma}
    In the random oracle model and assuming that $\seed$ is unpredictable, then the induced distribution of $E$ and $F$ is uniformly random from $\mathcal E_{r,n}$ (with very small statistical error).  In particular, with very high probability, every $r\times r$ submatrix of $E$ and $F$ is marginally uniform.
\end{lemma}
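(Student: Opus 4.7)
The plan is to split the statement into two parts---the uniform-distribution claim on $\mathcal{E}_{r,n}$ and the marginal uniformity of $r\times r$ submatrices---and handle each by a short orbit-counting / linear-algebra argument after reducing to truly uniform factor matrices. First, I would invoke the random oracle / unpredictability assumption on $\sigma$ to argue that $E_L, E_R, F_L, F_R$ can be treated as independent uniform matrices in $\mathbb{F}_q^{n\times r}$ and $\mathbb{F}_q^{r\times n}$, with statistical error coming only from the ROM side.

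For the claim that $E = E_L\cdot E_R$ is uniformly distributed on $\mathcal{E}_{r,n}$, the key idea is to exploit the free action of $\mathrm{GL}_r(\mathbb{F}_q)$ on the domain of the product map $\phi(A,B)=AB$ via $(A,B)\mapsto (AG, G^{-1}B)$. The standard uniqueness-of-rank-decomposition lemma tells us that every fiber $\phi^{-1}(M)$ over a rank-$r$ matrix $M$ is a single orbit of this action, hence has size exactly $|\mathrm{GL}_r(\mathbb{F}_q)|$. Thus $\phi$ is a uniform $|\mathrm{GL}_r(\mathbb{F}_q)|$-to-$1$ map from full-rank pairs onto $\mathcal{E}_{r,n}$, and it pushes the uniform distribution on its domain to the uniform distribution on $\mathcal{E}_{r,n}$. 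To quantify the ``very small statistical error,'' I would use the textbook bound $\Pr[E_L \text{ is column-rank-deficient}] \leq q^{r-n}/(q-1)$ (and the analog for $E_R$), giving total variation $O(q^{r-n})$ from uniform on $\mathcal{E}_{r,n}$; the same argument applies verbatim to $F$.

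For the marginal uniformity of $E_{S,T}$ with $S,T\subseteq[n]$ and $|S|=|T|=r$, I would simply observe that $E_{S,T} = (E_L)_{S,\cdot}\cdot (E_R)_{\cdot,T}$ is a product of two \emph{independent} uniform $r\times r$ matrices. Conditional on $(E_L)_{S,\cdot}$ being invertible (probability at least $1 - 1/(q-1)$ by the same rank estimate with $n = r$), multiplication by it is a bijection of $\mathbb{F}_q^{r\times r}$, so the product is uniform. Hence the marginal law of $E_{S,T}$ is within $O(1/q)$ total variation of the uniform distribution on $\mathbb{F}_q^{r\times r}$, which justifies the ``marginally uniform'' wording.

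The main ``obstacle'' here is really just bookkeeping: combining (i) the ROM error from $\sigma$ being only unpredictable rather than truly uniform, (ii) the rank-deficiency probability for the two $n\times r$ factors, and (iii) for the submatrix claim, the non-invertibility of the individual $r\times r$ block, to pin down precise quantitative versions of ``very small statistical error'' and ``very high probability.'' I do not foresee any conceptual difficulty, though one should note that negligibility relies on $r \ll n$ and, for the marginal statement, on $q$ being large (or, in small-$q$ scenarios, one weakens ``uniform on $\mathbb{F}_q^{r\times r}$'' to ``uniform on $\mathrm{GL}_r(\mathbb{F}_q)$'', which are at statistical distance $O(1/q)$).
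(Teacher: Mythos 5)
Your proof is correct, and it takes a somewhat cleaner and more rigorous route than the paper for the central step. The paper also begins by conditioning on $E_L,E_R$ being full rank (using the same rank-deficiency estimate that appears as a separate claim), but then argues uniformity of $E$ via the observation that the column space of $E$ equals that of $E_L$ and the row space of $E$ equals that of $E_R$, so both are independently uniform over $r$-dimensional subspaces; it concludes from this that every matrix of rank exactly $r$ is equally likely. That last inference is left implicit, and really needs the additional fact that conditional on the $(\text{row space},\text{column space})$ pair, $E$ is uniform over matrices with those spaces---precisely the $|\mathrm{GL}_r(\mathbb{F}_q)|$-to-$1$ fiber structure of the multiplication map. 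Your orbit-counting argument ($\phi(A,B)=AB$ is constant on, and bijective between, $\mathrm{GL}_r$-orbits $(AG,G^{-1}B)$, and every fiber over a rank-$r$ matrix is a single orbit) makes this step explicit and quantitative in one shot, so it is a genuinely more self-contained proof. For the ``in particular'' part about $r\times r$ submatrices, the paper treats it as an immediate corollary, whereas you give the direct argument $E_{S,T} = (E_L)_{S,\cdot}\cdot (E_R)_{\cdot,T}$ with $(E_L)_{S,\cdot}$ invertible except with probability $O(1/q)$; this is worth spelling out, since a submatrix of a uniform rank-$r$ matrix is not exactly uniform on $\mathbb{F}_q^{r\times r}$, and your version makes clear that the error here scales as $O(1/q)$ rather than $O(q^{r-n})$, a distinction the paper glosses over. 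One small caution: you should note that the lemma's phrasing ``with very high probability \dots marginally uniform'' is most naturally read as your TV-distance $O(1/q)$ statement, which is only ``very high'' for large field size $q$, an assumption that holds in the intended applications but should be made explicit.
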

\begin{proof}
   We show that $E$ is a uniformly random rank $r$ matrix of dimension $n\times \cd$. The analog statement for $F$ is proven in the same way.

 Let \( \mathcal{A}_{n,r} \) denote the set of all \( n \times r \) matrices of rank \( r \), and let \( \mathcal{B}_{r,\cd} \) denote the set of all \( r \times \cd \) matrices of rank \( r \).    First, in Claim~\ref{claim:random-is-full-rank}, we argue that except with small probability of error, $E_L$ is distributed uniformly at random in  \(\mathcal{A}_{n,r}\). The analog claim holds  for $E_R$ as well. We thus condition on $E_L$ and $E_R$ being uniformly random in their respective domains conditioned on being full rank.

   Let \( \mathcal{M}_{n,\cd,r} \subset {\mathbb F}_q^{n \times \cd} \) denote the set of all \( n \times \cd \) matrices over \( {\mathbb F}_q \) with rank exactly \( r \).
 By construction, the row space of \(E \) is the row space of \( E_L \), and since \( E_L \) is chosen uniformly from \( \mathcal{A}_{n,r} \), the row space of \( E \) is uniformly distributed over all \( r \)-dimensional subspaces of \( {\mathbb F}_q^n \). Similarly, the column space of \( E \) is the column space of \( E_R \), and since \( E_R \) is chosen uniformly from \( \mathcal{B}_{r,\cd} \), the column space of \( E \) is uniformly distributed over all \( r \)-dimensional subspaces of \( {\mathbb F}_q^\cd \).
Overall, the row and column spaces of \( E \) are independent and uniformly distributed over all possible \( r \)-dimensional subspaces of \( {\mathbb F}_q^n \) and \( {\mathbb F}_q^{\cd} \), respectively. This means (a) every matrix in \( \mathcal{M}_{n,\cd,r} \) can be produced and moreover each one is equally likely to be produced.

\end{proof}

\begin{claim}\label{claim:random-is-full-rank}
    Let \( A \) be an \( n \times d \) matrix (with $d<n$) whose entries are chosen independently and uniformly from $\mathbb Z_q$. Then, 
\[
P(\text{rank}(A) = d) = 1 - O\left( {1/q^{n-d+1}}\right).\]
\end{claim}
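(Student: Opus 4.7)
The plan is to bound the probability of rank-deficiency by a column-by-column argument. Write $A = [a_1, \ldots, a_\cd]$ with each column $a_k \in \mathbb{Z}^n$ drawn independently and uniformly from $[-q, q]^n \cap \mathbb{Z}^n$. The matrix $A$ fails to have rank $\cd$ precisely when there exists some $k \in \{1, \ldots, \cd\}$ for which $a_k$ lies in the $\mathbb{Q}$-linear span of the previous columns $a_1, \ldots, a_{k-1}$. I would bound the probability of this bad event for each $k$ separately, then apply a union bound.

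The key technical ingredient is a lattice-point count: for every linear subspace $V \subseteq \mathbb{Q}^n$ of dimension at most $k-1$, one has $|V \cap [-q,q]^n \cap \mathbb{Z}^n| \leq (2q+1)^{k-1}$. To prove it, I would pick $\dim V$ coordinates onto which $V$ projects injectively (always possible via a basis argument) and extend arbitrarily to a set of $k-1$ coordinates. The projection is still injective on $V$ and maps $V \cap [-q,q]^n \cap \mathbb{Z}^n$ into $[-q,q]^{k-1} \cap \mathbb{Z}^{k-1}$, a set containing exactly $(2q+1)^{k-1}$ integer points.

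Now I would condition on $a_1, \ldots, a_{k-1}$ and let $V$ be their $\mathbb{Q}$-span, so $\dim V \leq k-1$. Since $a_k$ is independent and uniform over $(2q+1)^n$ integer points, the lemma yields
\[
\Pr\bigl[\, a_k \in V \;\big\vert\; a_1, \ldots, a_{k-1} \,\bigr]
\;\leq\; \frac{(2q+1)^{k-1}}{(2q+1)^n} \;=\; (2q+1)^{k-1-n}.
\]
Taking expectation over the previous columns and union-bounding over $k = 1, \ldots, \cd$ gives
\[
\Pr\!\left[\text{rank}(A) < \cd\right]
\;\leq\; \sum_{k=1}^{\cd} (2q+1)^{k-1-n}
\;=\; \frac{(2q+1)^{\cd} - 1}{2q \cdot (2q+1)^n}
\;=\; O\!\left(\frac{1}{q^{\,n - \cd + 1}}\right),
\]
which is exactly the stated bound.

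The main obstacle is the lattice-point count: one has to argue carefully that an appropriate set of injective projection coordinates exists, and to handle the degenerate case where the first $k-1$ columns are themselves linearly dependent (so $\dim V < k-1$, which only makes the bound tighter). Once this counting lemma is established, the rest is a routine union bound and a geometric-series summation, so I expect the proof to be short after the lemma is pinned down.
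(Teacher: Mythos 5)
Your proof is correct and follows the same column-by-column route as the paper: bound the probability that the $k$-th column lies in the span of the previous ones, then combine over $k$. Where you diverge is in the details, and your version is actually the more careful of the two. First, the paper writes per-column probabilities of the form $1 - 1/q^{n-k}$, which is the exact count one gets over the finite field $\mathbb{F}_q$; but the claim is stated for uniform \emph{integers} in $[-q,q]$, where the right tool is precisely your lattice-point lemma $|V \cap [-q,q]^n \cap \mathbb{Z}^n| \le (2q+1)^{\dim V}$, proved via an injective coordinate projection. Second, the paper multiplies the conditional success probabilities and then passes to exponentials citing ``$e^{-x} < 1-x$,'' which is the reverse of the true inequality $e^{-x} \ge 1-x$ on $[0,1]$; your direct union bound $\Pr[\mathrm{rank}(A) < d] \le \sum_{k} (2q+1)^{k-1-n}$ sidesteps this issue entirely, at the cost of nothing. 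Both arguments land on the same $O(1/q^{n-d+1})$ estimate, but yours is a clean fix of the paper's version rather than a genuinely different approach.
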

\begin{proof}
The matrix \( A \) will have rank \( d \) if and only if its \( d \) columns are linearly independent. For \( A \) to be rank-deficient (rank \( < d \)), there must exist a non-trivial linear combination of the columns that results in the zero vector. This is equivalent to saying that one of the columns lies in the span of the preceding columns.

The first column of \( A \) is non-zero with probability \( 1 - \frac{1}{q^n} \), since it must not lie in the zero subspace.
Given that the first column is non-zero, the second column is linearly independent of the first with probability \( 1 - \frac{1}{q^{n-1}} \), as it must not lie in the one-dimensional subspace spanned by the first column.
More generally, for the \( k \)-th column to be linearly independent of the previous \( k - 1 \) columns, it must avoid the \( (k-1) \)-dimensional subspace spanned by those columns. This occurs with probability \( 1 - \frac{1}{q^{n-k+1}} \).

The probability that all \( d \) columns are linearly independent is the product of these probabilities, giving that
\[
\Pr(\text{rank}(A) = \cd) = \prod_{k=0}^{d-1} \left( 1 - \frac{1}{q^{n-k}} \right) \ge \prod_{k=0}^{d-1} \exp\left({-\frac{1}{q^{n-k}}}\right) = \exp\left({-\sum_{k=0}^{d-1} \frac{1}{q^{n-k}}}\right),\]
where the inequality holds since 
$ e^{-x} < 1 - x$ for  \( x \in [0, 1] \). Further, since \( \sum_{k=0}^{d-1} \frac{1}{q^{n-k}} \le 2/q^{n-d+1} \) and since \(1 - x + \frac{x^2}{2} < e^{-x}\) for  \( x \in [0, 1] \), we conclude that
$
\Pr(\text{rank}(A) = n) \ge \exp\left({- {2}/{q^{n-d+1}} }\right) \ge 1 -  2/q^{n-d+1} .
$

\end{proof}

\section{
Using Native AI Workloads for Blockchain Consensus}
\label{sec:matmul_in_ai}

Artificial Intelligence (AI) tasks, both training and inference phases, can be conceptualized as sequences of matrix multiplication operations interspersed with various other operations such as activation functions, normalizations, and rounding. The matrix multiplications are typically the most computationally intensive parts of any existing networks, with complexities often scaling as cubic in contrast to the square complexity of many of the other operations. The operations performed on these matrices during training and inference can be broken down into a series of matrix multiplications.

\paragraph{Training phase.}
During the training phase, the model adjusts its weights based on the input data and the error of its predictions. The key steps involve:

\begin{itemize}
    \item Forward Propagation:

\begin{itemize}
    \item 
    Linear Transformation: Each layer performs a linear transformation, which is essentially a matrix multiplication between the input data matrix $X$
 and the weight matrix $W$. This layer computes $Z=XW$.
    \item Activation Function: After the linear transformation, an activation function $f$ (e.g., ReLU, Sigmoid) is applied element-wise to introduce non-linearity. 
\end{itemize}    
\item Backward Propagation:

\begin{itemize} 
\item Gradient Calculation: The gradients of the loss function with respect to each weight matrix are computed. This involves several matrix multiplications, particularly when applying the chain rule of differentiation through the layers of the network.
\item Weight Update: The weight matrices are updated using the gradients and a learning rate. The update step is also a matrix operation, albeit simpler than multiplication.
\end{itemize}
\end{itemize}

\paragraph{Inference phase.} During inference, the model uses the learned weights to make predictions. This primarily involves:

\begin{itemize}
    \item Linear Transformation and Activation: Similar to the forward propagation in training, the input data passes through the network layers, each performing a matrix multiplication followed by an activation function.
\item Output Layer: The final layer often includes a matrix multiplication to produce the output predictions.
\end{itemize}

\paragraph{Reducing $\cupow$ overhead in AI workloads.}
Utilizing the structure and predictability of MatMuls in AI workloads can provide significant speedups in practice (i.e., reduction in overhead). We briefly discuss two ideas. First, in several applications, notably inference, the weight matrices are known and are public at the start of the process, and so their hash can be preprocessed, avoiding the need to hash them at every layer ``on the fly.'' Second, in training applications, the matrices in a given level are typically the output of a product from the previous level (after activations). Thus, in principle, one need not commit to the matrices from scratch in every level, but rather one can utilize a commitment to the very first level and then ``recompute'' the needed matrices.  

\paragraph{Computational complexity and load.}
The computational burden in AI models primarily arises from matrix multiplications. For matrices of size $n\times k$ and $k\times m$, the multiplication has a complexity of $n k m$. When $n=k=m$, this simplifies to $O(n^3)$. Other operations, including activation functions, normalizations, rounding, etc. typically scale with $O(n^2)$ as they operate element-wise. 

Based on experiments and benchmarks we have performed, we estimate that matrix multiplications dominate the computational workload in neural network training. In particular, we find that in typical scenarios at least 50\% and often even 70-80\% of the total computational load is devoted to matrix multiplication.

\section*{Acknowledgements}
We thank Yonatan Sompolinsky for multiple discussions on the topic of this work and for valuable comments on the manuscript. We thank Ohad Klein for valuable feedback and particularly for suggesting the FMM-based algorithm for computing the intermediates. We thank Eylon Yogev and Mark Zhandry for useful remarks and suggestions on this manuscript.

\bibliography{refs}
\bibliographystyle{alpha}

\appendix

\section{A Self-Canceling Noise Scheme}\label{append_sec_rand_rotation}

There is in principle nothing ``holy'' about the low-rank structure of the noise matrices used in the \cupow\; scheme (Algorithm \ref{fig:NoiseDeNoise2}) -- Indeed, all we seem to need is a \emph{low-entropy} perturbation operation $(A,B)\rightarrow (A', B')$, whose \emph{structure} can be exploited for decoding the \emph{output} $AB$, but not for intermediate computations -- i.e., the perturbed matrices $A',B'$ should have \emph{marginally uniform} projections on $r\times r$ tiles (which is the hard distribution for MatMul). 

\paragraph{Self-Canceling Noise  via Pseudorandom Rotations} 
We observe that using \emph{pseudorandom rotations} on high-rank matrices enables to \emph{avoid the decoding (``noise peeling") step in Algorithm \ref{fig:NoiseDeNoise2} altogether}. The idea is to \emph{randomly rotate} $A,B$ using a fast \emph{pseudorandom orthonormal} matrix $R$, e.g., the \emph{Randomized Hadamard} Transform \cite{AC06} which allows $O(n^2 \log n)$-time rotations $AR$ using the Fast Fourier Transform (FFT),   
and run the same tile-based PoW scheme from Algorithm \ref{fig:NoiseDeNoise2} \cupow(A',B'),  on the product of 
\[ A' := AR\; , \; B' := R^\top B.\] 
The key point is that, while the matrices $RR^\top = I_n$ cancel at the \emph{output} $A'B' = ARR^\top B = AB$ (hence no need to ```peel-off'' the noise), 
this cancellation does \emph{not} occur locally on 
the \emph{partial sums}  
\begin{align}\label{eq_rand_partial_sums}
P_{I, K, J} = \sum_{i \in I, k \in K, j \in J} (AR)_{i k} (R^\top B)_{k j}
\end{align}
of  \emph{intermediate} $r\times r$ tiles ($|I|=|K|=|J|=r$).  

Of course, this scheme, described in Algorithm \ref{fig:RandRotationScheme}, is useless if the (dishonest) miner chooses the all-zero matrices $A=B=0^{n\times n}$.  Nevertheless, 
assuming the columns of $A$ and rows of $B$ have \emph{high rank} (which can be ensured by adding a random permutation of the identity matrix), say $\rk(A),\rk(B) \gtrsim n$ for simplicity), the FastJL transofm $R$ guarantees that every (say) $\sqrt{n} \times \sqrt{n}$ tile of $AR$ is a random \emph{subspace embedding} \cite{CW13}, so we may as well assume $A=B=I_n$.

\paragraph{Local-vs.-Global Structure of FFT} 
Assuming $A=B=I_n$ for simplicity,  
the problem of computing the \emph{partial sums}  $(RR^\top)_{I,J,K} \in \R^{r\times r}$ appears to be hard, since the \emph{global} structure of the randomized Hadamard matrix seems hard to exploit \emph{locally} on submatrices. 
Indeed, it is not known how to multiply arbitrary $r\times r$ \emph{submatrices} of the FFT matrix by general $x\in \R^r$ in $o(r^2)$ time, unless massive pre-processing is allowed \cite{KU11}. 

To ensure the \emph{marginal entropy} of every $r\times r$ tile of $(AR)\times (R^\top B)$ is full ($\sim r^2$ bits), Algorithm \ref{fig:RandRotationScheme} composes the Hadamard transform with a \emph{block}-diagonal matrix $D$ --   choosing the block-size to be $d=r$ would ensure this, but this will lead to overall runtime of $O(n^2(r+\log n))$ of Algorithm \ref{fig:RandRotationScheme}, no faster than the PoUW scheme from Algorithm \ref{fig:functionalmatmul} ($O(n^2r)$). Nevertheless, we conjecture that even block-size $d=O(1)$ remains computationally hard without massive preprocessing.   

\begin{conjecture}\label{conj:amortized_partial_sums}
Let $A,B \in \mathbb{R}^{n \times n}$ be full-rank matrices. Then  
the \emph{amortized} cost of computing all $n^3/r^3$ partial sums  \eqref{eq_rand_partial_sums} in Algorithm \ref{fig:RandRotationScheme} with $r= \sqrt{n}, d=O(1)$,  
even with $O(n^3)$ preprocessing time, 
requires  $\Omega\left(r^3\right)$ time, in the word-RAM model with word-size $w=O(\log n)$. 
\end{conjecture}

This is the analogue of Assumption~\ref{conj_DP_LR_scheme}.

\begin{myalg}[Random-Rotation PoUW Scheme (Self-Canceling Noise)]\label{fig:RandRotationScheme}
    \noindent \underline{$\mathsf{rrPoW}_{r,d}(\seed,A,B) $}: $\quad \sslash  \quad A,B\in \mathbb{R}^{n\times n}$
    \begin{enumerate}
        \item Parameters: $r \in \mathbb{N}$ (PoW tile-size); \;\; $d \in \mathbb{N}$ (Hadamard block-size).
        \item Parse 
        $\seed = D = \mathsf{Diag}(B_1, B_2, \dots, B_{n/d})$,
where each block 
$B_i \in \{-1,1\}^{d \times d}, \\
(B_i)_{jk} \sim \text{Unif}(\{-1,1\})$. 
$\quad \\\sslash   D \in \{-1,1\}^{n\times n}$ is a random block-diagonal matrix with block-size $d$.
\item Let $R \leftarrow H_n\cdot D$, where $H_n \in \{\pm 1\}^n$ is the $n\times n$ Hadamard matrix \cite{AC06}.
\item $A'\leftarrow AR \;\; , \;\; B'\leftarrow R^\top B$.
        \item Run $\cupow_r(A',B')$ (Algorithm \ref{fig:functionalmatmul}).
    \end{enumerate}
\end{myalg}

\if0
\omri{Need to decide how to ensure 
the FULL RANK prop of A,B. Doesn't  adding random permutations of $I_n$ to A, B simply work? I recall Ohad had some deeper way to \bf test \rm if $\rk(A),\rk(B)> k$ ? If we take the TESTING high rank approach, need to add the following paragraph:}
...typical \emph{useful} matrices are never of this form, so one observation is that we may w.l.o.g \emph{impose} any statistical  or algebraic property  $\cP(A)$ on $A$ (or $B$), so long as it is: (i) satisfied by real-world distributions  with high probability $99.9\%$ (or else all bets are off and the protocol rejects); and (ii) 
Verifiable in sub-linear time (SNARK for $\cP$):

\paragraph{Sublinear verification (SNARKs)} We need the protocol to be 
efficiently verifiable in sublinear time (ideally, one can implement a SNARK for it in practice). Assuming these conditions are met, restricting  $A,B$ to the desired sub-family of matrices (i.e., passing the randomized property-test)  opens up the possibility of using alternative---possibly more efficient---noise operators.

\paragraph{Efficient Proof of Proximity for Matrix Rank? }  
As mentioned earlier, the Pseudorandom-Rotation scheme requires an efficient  
 SNARK (proof of proximity) 
 for testing (or more modestly, distinguishing) high-from-low rank matrices $A,B$ ($\leq k$ vs. $\geq ck$) -- In principle, this problem is equivalent to Low-Degree testing of the \emph{characteristic polynomial} of $A$, which has a fast and practical SNARK ($O(\log^2 k)$ \cite{RS97,Bs+18}), alas \emph{evaluating} $p_A(x) := \det(A-x\cdot I)$ at arbitrary points $x$ explicitly has exponentially many terms, hence \emph{approximation} via sampling seems necessary to make this idea work. 
 \fi
 \medskip 
 We leave additional theoretical and practical exploration of the (pseudo-)random rotation scheme for future work.



\section{Proof of Useful Work as a Poisson Process}
\label{sec:poisson}
Remember that a proof of useful work is associated with a function $f\colon \{0,1\}^n\to\{0,1\}^n$, reflecting the computation being done during the proof of work. 
We assume a canonical algorithm \textsc{Alg} for solving $f$, and denote $t(n,\tsk)$ the complexity of $\textsc{Alg}$ in solving $\tsk \in \{0,1\}^n$. We denote $t(n) = \max_{\tsk} f(n,\tsk)$ the worst-case complexity of $\textsc{Alg}$ over all inputs of size~$n$.

In a proof of work for blockchains, we require that the event that $\mathsf{Veify}$ outputs 1 follows a Poisson process. A Poisson process is a stochastic process that models the occurrence of random events over time or space, where these events occur independently and at a constant average rate. It is characterized by having exponentially distributed inter-arrival times between events. Bitcoin's proof of work mechanism is an example of a Poisson process wherein blocks are generated at a constant rate of~10 minutes.

\begin{definition}[Poisson process]
    A stream of events $E_1,E_2,\ldots$ is a Poisson process with rate $\rho$ if for every point in time $T$, interval $\overline T$, and parameter $k$, it holds that 
\[
\Pr\left[\sum_{i=T+1}^{T+\overline T} E_i = k\right] = \frac{(\rho \overline T)^k e^{-\rho \overline T}}{k!}.
\] 
\end{definition}

\begin{fact}\label{fact:bernoulli2poisson}
 Consider a sequence of $n$ independent and identically distributed Bernoulli random variables $E_1,\ldots,E_n$ with parameter $p$. (That is, each $E_i$ is 1 with probability $p$ and 0 with probability $0$.) The distribution of $\sum_{i=1}^n E_i$ follows a binomial distribution $B(n,p)$. It is known that the binomial distribution converges towards the Poisson distribution as $n\rightarrow \infty$ and $np$ converges to a finite limit (i.e., if $n$ is sufficiently large and $p$ is sufficiently small).
\end{fact}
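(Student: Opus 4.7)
The plan is to establish pointwise convergence of the binomial probability mass function to the Poisson PMF under the scaling $p = p_n$ with $\lambda_n := np_n \to \lambda$ as $n \to \infty$. Concretely, for each fixed $k \in \mathbb{N}$, I would show
\[
\Pr\!\left[\sum_{i=1}^n E_i = k\right] \;=\; \binom{n}{k} p_n^{k}(1-p_n)^{n-k} \;\longrightarrow\; \frac{\lambda^{k} e^{-\lambda}}{k!}.
\]

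First I would factor the binomial probability as
\[
\frac{1}{k!}\cdot\underbrace{\prod_{j=0}^{k-1}\!\Bigl(1-\tfrac{j}{n}\Bigr)}_{\text{(I)}}\cdot\underbrace{(np_n)^{k}}_{\text{(II)}}\cdot\underbrace{(1-p_n)^{n}}_{\text{(III)}}\cdot\underbrace{(1-p_n)^{-k}}_{\text{(IV)}},
\]
and then handle each factor separately. Since $k$ is fixed, (I) tends to $1$ as $n\to\infty$; (II) $= \lambda_n^{k} \to \lambda^{k}$ by hypothesis; and (IV) tends to $1$ because $p_n = \lambda_n/n \to 0$. The only nontrivial piece is (III), which I would treat by taking logarithms: using the Taylor expansion $\log(1-x) = -x - x^2/2 - O(x^3)$ valid for $x \in [0,1/2]$, one gets $n\log(1-p_n) = -np_n - np_n^{2}/2 - O(np_n^{3}) = -\lambda_n + O(\lambda_n^{2}/n)$, which converges to $-\lambda$ since $\lambda_n$ is bounded. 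Exponentiating yields (III) $\to e^{-\lambda}$, and multiplying the four limits gives the desired Poisson PMF.

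The main obstacle, modest as it is, is controlling the error term in (III): one needs the joint regime ``$p_n \to 0$ while $np_n$ stays bounded'' so that $np_n^{2} = \lambda_n \cdot p_n \to 0$. This is precisely the ``$n$ large and $p$ small'' hypothesis stated in the fact. Once the single-interval convergence is in hand, the claim about the full \emph{process} (as invoked by the preceding definition of Poisson process) follows by noting that disjoint blocks of the i.i.d.\ Bernoulli sequence are independent, so the limiting counts on disjoint intervals are independent Poissons whose rates add correctly, which is exactly the defining property of a Poisson process with rate $\rho = \lim np_n / \text{(time unit)}$.
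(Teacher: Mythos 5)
The paper states this as a known \emph{Fact} and gives no proof of its own, so there is nothing in the paper to compare against at the level of technique. Your argument is the standard textbook proof of the Poisson limit theorem (law of rare events): factor the binomial PMF, send the polynomial prefactor and the $(1-p_n)^{-k}$ term to $1$, identify $(np_n)^k\to\lambda^k$, and control $(1-p_n)^n\to e^{-\lambda}$ via the logarithm, where the only thing to watch is that $np_n^2 = \lambda_n p_n \to 0$ under the joint hypothesis. That step is handled correctly. The closing remark on upgrading pointwise convergence of the marginal to a Poisson \emph{process} via independence of disjoint blocks is the right idea and matches how the paper later uses the fact in the Bitcoin analysis, though a fully rigorous treatment would also note that one needs $np_n\to\lambda$ (or at least a Cauchy-type condition), not merely boundedness, to pin down a single limit; you do assume convergence at the outset, so the proof is sound. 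In short: your proof is correct, and it fills in a gap the paper deliberately left open by citation.
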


\begin{definition}[Proof of useful work for blockchains]
\label{def:PoUWBlobkchain}
Let $\mathsf{Solve}(\seed, \mathsf{data}, \tsk)$ and $\mathsf{Verify}(\seed,\mathsf{data}, \pi)$ be two procedures, as in Section~\ref{sec:pouw-def}, except that we add the $\mathsf{data}$ parameter. These procedures constitute a $(\rho,\alpha)$  proof of work if the following two conditions hold:
\begin{itemize}
    \item[(a)] The honest mining process generates a stream of events $E_1,E_2,\ldots$ that forms a Poisson process with rate $\rho$.
    \item[(b)] For every $T$ and $\overline T$, the $(T,\overline T,\alpha)$-adversarial mining process generates a stream of events $E_1,E_2,\ldots$ that forms a Poisson process with rate $\le \rho$.
\end{itemize}

\setlength{\columnsep}{1cm} %
\setlength{\columnseprule}{0.4pt} 
\begin{mdframed}
\vspace{-10px}
\begin{multicols}{2}
\underline{\textbf{Honest mining process}}:
\begin{enumerate}
    \item Initialize $i=1$. 
    \item\label{item:pouw_bc}  Sample a uniformly random $\sigma_i$, and choose~$\tsk_i$ and~$\mathsf{data}_i$.
    \item Compute $\pi_i=\mathsf{Solve}(\seed_i,\mathsf{data}_i, \tsk_i)$.
    \item Compute $E_i = \mathsf{Verify}(\seed_i,\mathsf{data}_i,\pi_i)$.
    \item Increase $i$ by 1 and return to step~\ref{item:pouw_bc}.
\item[]
\item[]
\end{enumerate}

\underline {\textbf{$(T,\overline T,\alpha)$-adversarial mining process}}:
\begin{enumerate}
    \item Run the honest process for $T$ steps resulting with $E_1,\ldots,E_T$ and $\{(\sigma_i,\tsk_i,\mathsf{data}_i,\pi_i)\}_{i\in[ T]}$.
    \item 
    Given all of the above, initialize (in polynomial time) a time $\alpha(n)\cdot (t(n)\cdot \overline T)$ adversary. Upon a set of uniformly random $\sigma_{T+1},\ldots,\sigma_{T+\overline{T}}$, run it to generate $\pi_{T+1},\ldots,\pi_{T+\overline T}$. Compute $ E_i = \mathsf{Verify}(\seed_i,\mathsf{data}_i,\tilde \pi_i)$ for $T< i \le T+\overline T$.
    \item Continue running the honest process with $i=T+\overline T+1$.
\end{enumerate}
\end{multicols}
\end{mdframed}
\end{definition}

\paragraph{Bitcoin's proof of work.} 
To see that Bitcoin's PoW process forms a Poisson process with rate $\rho$, fix a point in time $t$ and a polynomially long interval $T$. A polynomial number of random oracle evaluations are independent of one another (and of any preprocessing). Thus, the number of successful random oracle evaluations (block discoveries) follows a Binomial distribution: $X \sim \text{Binomial}(n, p)$ with $n=R\cdot T$ and $p=\difficulty/2^\lambda$, where $R$ is the total hash rate (hashes per second) of the network and $\difficulty$ is a difficulty parameter. 
By Fact~\ref{fact:bernoulli2poisson}, for large $n$ and small $p$, the Binomial distribution approximates a Poisson distribution with parameter $\rho = n \cdot p = (R\cdot T \cdot \difficulty)/2^{\lambda}$. The value of  $\difficulty$ is adjusted so that $\rho=1/600$, implying a block every 10 minutes, in expectation. Since we are using a random oracle to model the hash function, it is also the case that iterated hashes is the best strategy for a malicious solver in the PoW, implying that there is no way to obtain any speedup.

\paragraph{Bitcoin from our PoUW.}
We argue that a PoUW for blockchains, as in Definition~\ref{def:PoUWBlobkchain}, along with standard properties of the cryptographic hash function, can be used to establish the security of the Bitcoin backbone protocol~\cite{GarayKL24}. More precisely, PoUW can be used to instantiate the same functionality as that of Bitcoin, i.e., a public ledger, as defined in~\cite{GarayKL24}. The proof follows the one presented in~\cite{GarayKP20}, where we instantiate signatures of work with PoUW.  We refer to~\cite{GarayKP20} for the proof.

\if0
\subsection{Bitcoin from our PoUW}
We argue that a PoUW for blockchains, as in Definition~\ref{def:PoUWBlobkchain}, along with standard properties of the cryptographic hash function, can be used to establish the security of the Bitcoin backbone protocol~\cite{GarayKL24}. More precisely, PoUW can be used to instantiate the same functionality as that of Bitcoin, i.e., a public ledger, as defined in~\cite{GarayKL24}.

A public ledger is defined with respect to
a set of valid ledgers $\mathcal L$ and a set of valid transactions $\mathcal T$, each one possessing an efficient membership
test. A ledger $x \in \mathcal L$ is a vector of sequences of transactions $tx \in \mathcal T$. Ledgers correspond to chains in
the Bitcoin protocol. It is possible for the adversary to create two transactions that are conflicting;
valid ledgers must not contain conflicting transaction. Moreover, it is assumed that in the protocol
execution there also exists an oracle $\mathsf{Txgen}$ that generates valid transactions, and is unambiguous, i.e.,
the adversary cannot create transactions that come in ``conflict'' with the transactions generated by the
oracle. A transaction is called \emph{neutral} if there does not exist any transactions that comes in conflict
with it.

\begin{definition}[Public transaction ledger]\label{def:ledger}
    A protocol $\Pi$ implements a robust public transaction ledger if it organizes the ledger as
a chain of blocks of transactions and satisfies the following two properties:
\begin{itemize}
    \item \textbf{Persistence}: Parameterized by $k \in \mathbb N$ (the ``depth'' parameter), if in a certain round an honest
player reports a ledger that contains a transaction $tx$ in a block more than $k$ blocks away from the
end of the ledger, then $tx$ will always be reported in the same position in the ledger by any honest
player from this round on.
\item \textbf{Liveness}: Parameterized by $u, k \in \mathbb N$ (the ``wait time'' and ``depth'' parameters, resp.), provided
that a transaction either (i) issued by $\mathsf{Txgen}$, or (ii) is neutral, is given as input to all honest
players continuously for $u$ consecutive rounds, then all honest parties will report this transaction
at a block more than $k$ blocks from the end of the ledger.
\end{itemize} 
\end{definition}

\paragraph{The Bitcoin Backbone Protocol}
The Bitcoin backbone protocol~\cite{GarayKL24} is an abstraction of the Bitcoin protocol. Formally, it is parameterized by three functions $V(\cdot)$, $R(\cdot)$, and $I(\cdot)$. 
\begin{itemize}
    \item \textit{Content validation predicate $V(\cdot)$}.
$V(\cdot)$ returns true if its input is a valid ledger.
\item 
\textit{Chain reading function $R(\cdot)$.} $R(\cdot)$ returns the contents of the chain if it constitutes a valid ledger;
otherwise, it is undefined. Note that  $R(\cdot)$ interprets the blockchain's content differently depending on the higher-level application running on top of the protocol.
\item \textit{Input contribution function
$I(\cdot)$.}
$I(\cdot)$ returns the largest ordered set of transactions from the pool of queued transactions that constitute a valid ledger, with respect to the
contents of the chain the party already has.
\end{itemize}

The terms block refers to a tuple of the form $\langle \seed, \mathsf{data}, \pi\rangle$, where $\seed$ is the seed, $\mathsf{data}$ is the data, and $\pi$ is a proof. A chain refers to  a sequences of
blocks. The rightmost block is referred to as the tip of the blockchain and leftmost block is referred to as the genesis block. For the genesis block, $\pi=0$. 
All parties have access to the genesis block and to a hash function $\mathcal O$. a chain $C=\langle B_1,\ldots,B_n\rangle$ is said to be valid if (i) $B_1$ is the genesis block, (ii)
for any two consecutive blocks $\mathcal O(\seed_i,\mathsf{data}_i,\pi_i)=\seed_{i+1}$, (iii)
each block, besides the genesis block, contains a $\pi$ that passed verification (i.e., $\mathsf{Verify}(\seed_i,\pi_i)=1$), and (iv) the content
validation predicate $V(C)$ outputs true. We call $\mathcal O(\seed_i,\mathsf{data}_i,\pi_i)$
 the hash of block $B_i$ and
denote it by $\mathcal O(B_i)$. Moreover, we define $\mathcal O(C)$ to be the hash of the head of chain $C$.

At any point in time, each party chooses the longest valid chain amongs the ones in its view and
tries to extend it by computing (mining) another valid block. If it succeeds, it distributes the new block
to the network. In more detail, each party will run the $\mathsf{Solve}$ procedure, with the message parameter
being determined by the input contribution function $I(\cdot)$, and the seed parameter being the hash of the tip of the blockchain. We assume that the hardness parameter $\difficulty$ is fixed for all executions. Finally, if the party
is queried by the environment, it outputs $R(C)$ where $C$ is the chain selected by the party. Each honest party runs for at most $t_{\mathcal O}$ steps per round. 

\begin{myalg}[Mining procedure]\label{fig:BC_PoUW_1}
    \noindent \underline{$\mathsf{PoUW}(\mathsf{data},C,\tsk)$.}  
    \begin{enumerate}
        \item Let $B$ be the block at the tip of the longest chain in $C$. Set $\seed=\mathcal O(B)$.
        \item $\pi = \mathsf{Solve}(\seed,\mathsf{data},\tsk)$.
        \item If $\pi\neq \bot$: extend chain by $C=C\|\langle \seed,\mathsf{data},\pi\rangle$.
        \item Output $C$. 
    \end{enumerate}
\end{myalg}

\begin{myalg}[Chain validation procedure]\label{fig:BC_PoUW_2}
    \noindent \underline{$\mathsf{validate}(C)$.}  
    \begin{enumerate}
        \item Verify that the genesis block of $C$ is ``correct'', and that the content is verified by $V(C)$.
        \item If not, return 0 and abort; Otherwise, proceed.
        \item Parse the blockchain as sequence of blocks $B_1,\ldots,B_\ell$, where $B_i = \langle \seed_i, \mathsf{data}_i, \pi_i\rangle$.
        \item Verify that $\seed_i = \mathcal O(B_{i-1})$ for each $i=2,\ldots,\ell$. If any of the tests fail, output 0 and abort.
        \item Output 1.
    \end{enumerate}
\end{myalg}

\begin{myalg}[Bitcoin backbone protocol]\label{fig:BC_PoUW_3}
    \noindent \underline{$\mathsf{validate}(C)$.}  
    \begin{enumerate}
        \item Initialize chain $C$ as consisting of genesis block. 
        \item set $\mathsf{st}=\bot$ and $\mathsf{round}=0$.
        \item While True, do:
        \begin{enumerate}
            \item Let $\tilde C$ be the best chain in $C$, i.e., $\mathsf{BestChain}(C)$ (see Algorithm~\ref{fig:BC_PoUW_4}), and denote $B^*=\langle \seed,\mathsf{data}, \pi\rangle$ the tip of that chain.
            
            \item Let $\mathsf{data}^* = I(\cdot )$ and $\seed^*=
            \mathcal O(B^*)$.
            \item Fetch a task $\tsk$ and compute $\mathsf{Solve}(\seed^*, \mathsf{data}^*,\tsk)$ to output $\sol$ and $\pi^*$. If $\pi^*\neq \bot$, 
            \begin{enumerate}
                \item append the block $\langle \seed^*, \mathsf{data}^*, \pi^*\rangle$ to $\tilde C$, and
                \item broadcast $C$
            \end{enumerate}
        \end{enumerate}
    \end{enumerate}
\end{myalg}

\begin{myalg}[Chain selection function]\label{fig:BC_PoUW_4}
    \noindent \underline{$\mathsf{BestChain}(C)$.}  
    \begin{enumerate}
        \item Parse $C$ as a set of chains.
        \item Output the chain that is the longest among all chains that pass $\mathsf{validate}$. 
    \end{enumerate}
\end{myalg}

We show that the above protocol implements a public transaction ledger as in Definition~\ref{def:ledger}.

Let $X_j = 1$ if and only if $j$ was a successful round, i.e., at least one honest
party computed a PoUW at round $j$, and let $Y_j = 1$ if and only if $j$ was a uniquely successful round,
i.e., exactly one honest party computed a PoUW at round $j$. With respect to a set of rounds $S$, let
$X(S) = \sum_{j\in S} X_j$ and define $Y(S)$ similarly. Moreover, with respect to some block $B$ computed by
an honest party $P$ at some round $r$, let $Z_B(S)$ denote the number of distinct blocks broadcast by the
adversary during $S$ that have $B$ as their ancestor. Define $X_B(S)$ similarly.

\begin{lemma}
     For any set of consecutive rounds $S=\{T+1,\ldots,T+\overline T\}$ and for any party $P$, the probability that $P$ mined
some honest block $B$ at some round $i\in S$ and $Z_B(S) \ge \alpha \cdot t(n)\cdot \overline T$ is at most $\epsilon$.
\end{lemma}
\begin{proof}
    Assume towards contradiction that the statement is false, namely, that there is a set $S$ of consecutive $\overline T$ rounds wherein a party $P$ mined an honest block and also the adversary $\mathcal A$ mined $\alpha t(n) \overline T$ distinct blocks with probability greater than $\epsilon$. We shall use $\mathcal A$ to build another $\mathcal A'$ that breaks the security of a PoUW (Definition~\ref{def:PoUWBlobkchain}). 

    The adversary $\mathcal A'$ works as follows. First, it     executes the blockchain honestly until round $T$, using the resulting blocks as input to $\mathcal A$. At the $(T+1)$-th round, $\mathcal A'$ invokes $\mathcal A$ to get a sequence of $\overline T$ blocks, and then it proceeds again by honestly running the backbone protocol.
\end{proof}
\fi

\section{Unpredictable Encodings} \label{appx:ue}
In Section~\ref{sec:tue} we introduced a notion termed transcript unpredictable encodings that suffices for a proof of useful work for matrix multiplication. Here we provide an even more generic abstraction that we believe could be of independent interest. We call this primitive an \emph{unpredictable encoding} scheme (UES) which is a form of an ``instance optimal'' randomized encoding scheme with some new and incomparable properties to classical randomized encoding schemes. 
 A UES is associated with a function $f\colon \{0,1\}^n\to\{0,1\}^*$ with a canonical algorithm \textsc{Alg} with worst-case running time $t(n)$, and it consists of two operations:
\begin{itemize}
    \item $\mathsf{Encode}(x;r )$: for an input $x$ this probabilistic procedure  outputs an encoding $\tilde x$.
    \item $\mathsf{Eval}(\tilde x)$: for an encoded input $\tilde x$ this deterministic procedure outputs an output $y$.
\end{itemize}

\paragraph{Correctness.} We require that $\mathsf{Eval} (\mathsf{Encode}(x)) = f(x)$ for every $x$.

\paragraph{Running time.}
 The worst-case running time of $\mathsf{Encode}$ plus the running time of $\mathsf{Eval}$ is $t(n)\cdot (1+o(1))$.

\paragraph{Hardness.} It should be hard to compute the encoding of a given $x$ on a random $r$ in $\ll $ the running time of $\mathsf{Encode}$. That is, there is a constant $C$ such that for every $x$ and every time $t'(n)$ algorithm $\mathcal A$, it holds that $
    \Pr_r[ \mathcal A(x,r) = \mathsf{Encode}(x;r)] \le C\cdot (t'(n)/t(n))$.

Notice that the transcript unpredictability definition from Section~\ref{sec:tue} is a special case of the above where the encoding $\tilde x$ is essentially the transcript of the canonical matrix multiplication algorithm applied on the encoded input matrices. 

\paragraph{Relation to randomized encodings.}
The above definition resembles, at least syntactically, randomized encodings~\cite{DBLP:conf/focs/IshaiK00,DBLP:journals/siamcomp/ApplebaumIK06}, and so we explain the differences. A randomized encoding allows to express an arbitrary computation, given by a function $f$ and
input $x$, by a randomized representation $\hat{f}(x)$ whose distribution (viewed as a random variable) encodes
$f(x)$, while revealing nothing else regarding $f$ and~$x$. We do not require privacy of $x$ given $\tilde x$. On the other hand, we require a fine-grained hardness property and also an ``optimal'' efficiency property.

\end{document}